\theoremstyle{definition}
\DeclarePairedDelimiterX\bk[2]{\langle}{\rangle}{#1 \delimsize\vert #2}
\DeclarePairedDelimiterX\kb[2]{\vert }{\vert }{#1 \delimsize\rangle\langle#2}
\DeclarePairedDelimiter\ceil{\lceil}{\rceil}
\def\={\;=\;} \def\+{\,+\,}
\newcommand\restr[2]{{
  \left.\kern-\nulldelimiterspace 
  #1 
  \vphantom{\big|} 
  \right|_{#2} 
  }}
\newtheorem{theorem}{Theorem}
\newtheorem{lemma}[theorem]{Lemma}
\newtheorem{example}{Example}
\apptocmd\appendix{%
  \addcontentsline{toc}{chapter}{Appendix}%
  \counterwithin{equation}{section}%
  \counterwithin{figure}{section}%
  \counterwithin{table}{section}%
}{}{}
\newcommand{\cbigoplus}{\DOTSB\cbigoplus@\slimits@}
\newcommand{\cbigoplus@}{\mathop{\widehat{\bigoplus}}}
\begin{document}

\title{Calculable lower bounds on the efficiency of universal sets of quantum gates}
\author{Oskar S\l{}owik$^{*}$, Adam Sawicki$^{\dagger}$}

\date{%
     Center for Theoretical Physics, Polish Academy of Sciences,\\ Al. Lotników 32/46, 02-668 Warsaw, Poland\\%
$^*$oslowik@cft.edu.pl, $^\dagger$a.sawicki@cft.edu.pl\\[2ex]%
}
\maketitle

\begin{abstract}
Currently available quantum computers, so called Noisy Intermediate-Scale Quantum (NISQ) devices, are characterized by relatively low number of qubits and moderate gate fidelities. In such scenario, the implementation of quantum error correction is impossible and the performance of those devices is quite modest. In particular, the depth of circuits implementable with reasonably high fidelity is limited, and the minimization of circuit depth is required. Such depths depend on the efficiency of the universal set of gates $\mathcal{S}$ used in computation, and can be bounded using the Solovay-Kitaev theorem. However, it is known that much better, asymptotically tight bounds of the form $\mathcal{O}(\mathrm{log}(\epsilon^{-1}))$, can be obtained for specific $\mathcal{S}$. Those bounds are controlled by so called spectral gap, denoted $\mathrm{gap}(\mathcal{S})$. Yet, the computation of $\mathrm{gap}(\mathcal{S})$ is not possible for general $\mathcal{S}$ and in practice one considers spectral gap at a certain scale $r(\epsilon)$, denoted $\mathrm{gap}_r(\mathcal{S})$. This turns out to be sufficient to bound the efficiency of $\mathcal{S}$ provided that one is interested in a physically feasible case, in which an error $\epsilon$ is bounded from below. In this paper we derive lower bounds on $\mathrm{gap}_r(\mathcal{S})$ and, as a consequence, on the efficiency of universal sets of $d$-dimensional quantum gates $\mathcal{S}$ satisfying an additional condition. The condition is naturally met for generic quantum gates, such as e.g. Haar random gates. Our bounds are explicit in the sense that all parameters can be determined by numerical calculations on existing computers, at least for small $d$. This is in contrast with known lower bounds on $\mathrm{gap}_r(\mathcal{S})$ which involve parameters with ambiguous values.
\end{abstract}

\section{Introduction and main results}

Universal, scalable and fault-tolerant quantum computers are the holy grail of quantum computing. Such devices require quantum error correction that, due to quantum threshold theorem, can be implemented if the levels of gate errors are small enough \cite{knill1998, kitaev2003, aharonov2008}. However, recent quantum hardware, so called Noisy Intermediate-Scale Quantum (NISQ) devices, does not offer gate fidelities required for quantum error correction and their performance is heavily affected by gate imperfections \cite{montanaro2017, preskill2018, boxio2018}. Because of error accumulation effects, the depth of circuits feasible for NISQ devices is very modest. Hence it is imperative to find ways to minimize such depths. One of the ways to address this issue is to focus on the efficiency of universal sets \cite{sawicki_s17, sawicki17} of gates $\mathcal{S}$ used for the computations.

Spectral gap is an useful measure of efficiency of universal sets of quantum gates $\mathcal{S} \subset \mathrm{SU}(d)$. The value of gap for chosen $\mathcal{S}$, denoted $\mathrm{gap}(\mathcal{S})$, lies between 0 (no gap) and some optimal value $\mathrm{gap}_{\mathrm{opt}} < 1$, depending only on the number of gates $|\mathcal{S}|$ \cite{kesten59}. The higher the value of $\mathrm{gap}(\mathcal{S})$, the better is the upper bound on the minimal length (circuit depth) $\ell$ of a sequence of gates from $\mathcal{S}$ required to $\epsilon$-approximate any unitary operation from $\mathrm{SU}(d)$.
Recall that the Solovay-Kitaev theorem \cite{kitaev2002} provides such a bound for depth $\ell$,
\begin{equation}
\ell = \tilde{A}(\mathcal{S}) \cdot \mathrm{log}^{3+\delta}(1/\epsilon)), \quad \delta > 0 \mathcal{.}
\end{equation}
However, the existence of gap, i.e. $\mathrm{gap}(\mathcal{S}) > 0$, implies that
\begin{equation}
\label{eq:ah}
\ell = A(\mathcal{S}) \cdot \mathrm{log}(1/\epsilon)) + B(\mathcal{S})    
\end{equation}
is enough, with the constants $A$ and $B$ proportional to $\mathrm{log}^{-1}(1/(1-\mathrm{gap}(\mathcal{S})))$ \cite{harrow02}. In fact, $\ell= \mathcal{O}(\mathrm{log}(1/\epsilon)))$ is optimal, which can be seen from a simple volumetric argument. 


One should note that some properties of $\mathcal{S}$ with optimal spectral gap are known. For instance, if the gates from the universal set $\mathcal{S}$ have algebraic entries then the gap exists \cite{gamburd08, gamburd12}. Moreover, it has been conjectured that any universal $\mathcal{S}$ has the gap and there are explicit constructions of examples of $\mathcal{S}$ with the optimal spectral gap for $\mathrm{SU}(2)$ with $|\mathcal{S}|= p -1$ for $p \equiv 1 \, \mathrm{mod} \, 4$ \cite{lps86, lps87}. Finally, some commonly used one-qubit universal sets turned out to have the optimal spectral gap \cite{bocharov2013, selinger2015, sarnak2015, kliuch2016}. However, the construction of many-qubit gates with the optimal spectral gap remains an open problem.

The calculation of $\mathrm{gap}(\mathcal{S})$ is challenging and in practice one often considers the gap up to the certain scale $r$, denoted $\mathrm{gap}_r(\mathcal{S})$, such that $\mathrm{gap}(\mathcal{S})$ is the infimum of $\mathrm{gap}_r(\mathcal{S})$ over all scales $r$ \footnote{The exact definition of a scale depends on the approach.}. Since it is impossible to implement gates without any error, in practice $\epsilon$ can be  bounded from below. In such a case, in order to bound $\ell$ it is sufficient to have the knowledge of $\mathrm{gap}_r(\mathcal{S})$ at some scale $r(\epsilon)$ instead of $\mathrm{gap}(\mathcal{S})$. This is due to the existence of the Solovay-Kitaev-like theorems involving $\mathrm{gap}_r(\mathcal{S})$. Specifically, it is known that for any universal $\mathcal{S}$ one can bound $\ell \propto \mathrm{gap}_r^{-1}(\mathcal{S}) \cdot  \mathrm{log}(1/\epsilon)$ at some scale $r(\epsilon)$ (see the first part of Lemma 5 in \cite{varju13} and the improved version with $r(\epsilon) \simeq \mathcal{O}(1/\epsilon \cdot \mathrm{log}(1/\epsilon))$ - Proposition 2 in \cite{oszmaniec22}). Thus, bounding $\mathrm{gap}_r(\mathcal{S})$ is imperative. From the seminal paper \cite{varju13} it is known, in more general setting of semisimple compact connected Lie groups, that there exist group constants $c, A$ and $r_0$ such that
\begin{equation}
\label{eq:vb}
\mathrm{gap}_r(\mathcal{S}) \geq c  \cdot \mathrm{gap}_{r_0}(\mathcal{S}) \cdot  \mathrm{log}^{-A} (r) \mathrm{,}   
\end{equation}
for any $r \geq r_0$. Thus, the knowledge of gap at the certain scale $r_0$ enables to bound the rate at which $\mathrm{gap}_r(\mathcal{S})$ vanishes with growing $r \geq r_0$.  However it is unclear what is the magnitude of the minimal scale $r_0$ from which the bound (\ref{eq:vb}) holds, even for $\mathrm{SU}(2)$. Our preliminary analysis of this bound suggests that the value of $r_0$ for $\mathrm{SU}(2)$ resulting from the proof is enormous - orders of magnitude larger than the scale for which the numerical calculation of $\mathrm{gap}_{r_0}(\mathcal{S})$ is remotely possible.


In this paper we exploit Bourgain's argument for bounding $\mathrm{gap}_{r}(\mathcal{S})$ by the diameter of $\mathcal{S}$, which was communicated in the proof of the second part of Lemma 5 from \cite{varju13}. By introducing an additional assumption on $\mathcal{S}$ we obtain calculable bounds on $\mathrm{gap}_{r}(\mathcal{S})$ for universal sets of quantum gates. Our additional assumption on $\mathcal{S}$ is satisfied e.g. for generic quantum gates, such as Haar random gates (with probability 1). The main result of the paper is the following.

\begin{theorem}
\label{th:main}
Let $\mathcal{S}=\{U_1, \ldots, U_k, U_1^{-1}, \ldots U_k^{-1}\}$ be a universal symmetric set of $d$-dimensional quantum gates, such that for any $U_i, U_j \in \mathcal{S}$, $i \neq j$, the set $\{U_i^2, U_j^2, U_i^{-2}, U_j^{-2}\}$ is universal. Then
\begin{equation}
\label{eq:main}
\mathrm{gap}_t(\mathcal{S}) \geq \alpha \cdot g_{t_0}(\mathcal{S}) \cdot \mathrm{log}^{-2c}(\beta t)\mathrm{,}
\end{equation}
where $c=\mathrm{log}(5)/\mathrm{log}(3/2) \approx 4$, $\alpha$ and $\beta$ are known constants and $g_{t_0}(\mathcal{S})$ can be determined by the numerical calculations of gaps at a known scale $t_0$ of certain universal sets that can be derived from $\mathcal{S}$.
\end{theorem}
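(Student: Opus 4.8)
The plan is to carry out Bourgain's bootstrapping argument for the spectral gap at a scale, communicated in the proof of the second part of Lemma~5 of \cite{varju13}, while keeping every constant explicit and replacing the non-constructive base scale $r_0$ appearing there by a scale $t_0$ that is within reach of numerical computation. Throughout, let $\mu$ denote the uniform probability measure on $\mathcal{S}$; since $\mathcal{S}$ is symmetric, the convolution (Hecke) operator $f \mapsto \mu * f$ is self-adjoint on $L^2(\mathrm{SU}(d))$, and $\mathrm{gap}_t(\mathcal{S})$ is its spectral gap on the span of matrix coefficients of irreducible representations below scale $t$.

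First I would isolate the \emph{bootstrap step}: an inequality, proved with absolute constants, to the effect that a lower bound on $\mathrm{gap}_s(\mathcal{S})$ forces a lower bound on $\mathrm{gap}_{s'}(\mathcal{S})$ with $s'$ a fixed power of $s$, at the cost of multiplying the gap by a constant. The scale is raised per step because the flattening estimate for $\mu^{*n}$ relies on comparing it with $\mu^{*2n}$ (or $\mu^{*3n}$), which is the source of the base $3/2$ in the constant $c=\log 5/\log(3/2)$; the constant-factor loss per step — the factor $1/5$, coming from a lazy random walk on a derived four-element set — is the source of the $\log 5$. Iterating the step $N\asymp \log_{3/2}\!\big(\log(\beta t)/\log t_0\big)$ times carries the base scale $t_0$ up to scale $t$ and multiplies the initial gap by $(1/5)^{N}\asymp \log^{-c}(\beta t)$; the doubling that is intrinsic to Bourgain's argument (the measure enters squared) upgrades this to $\log^{-2c}(\beta t)$, which is the exponent in (\ref{eq:main}).

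The second ingredient is the base case, and this is where the hypothesis on squares does the work. The non-concentration (``escape from proper subgroups'') input that Bourgain's step requires for $\mu^{*n}$ is, after passing to even times, an input about degree-two words, so it suffices to establish it for the two-generator measures supported on $\{U_i^2,U_j^2,U_i^{-2},U_j^{-2}\}$; by hypothesis each of these sets is universal, so each carries a spectral gap, and quantitative equidistribution then bounds the word length in $\mathcal{S}$ needed to $\delta$-cover $\mathrm{SU}(d)$ in terms of the gaps of these derived sets at an explicit small scale $t_0=t_0(d)$. Taking the worst scale-$t_0$ gap over all pairs $\{i,j\}$ with $i\neq j$ yields the computable quantity $g_{t_0}(\mathcal{S})$. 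Feeding this base-case estimate into the iteration and collecting the multiplicative constants accumulated along the way produces the advertised $\alpha$ and $\beta$.

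The main obstacle — and the reason this does not just fall out of \cite{varju13} — is making that base-case input effective: in Varjú's transcription of Bourgain's argument the escape estimate is used with an unquantified scale, and the real content here is to show that the squares hypothesis renders it computable, i.e.\ that a numerically accessible gap of the derived four-element sets at a small scale $t_0$ genuinely controls the covering diameter of $\mathcal{S}$ with all implied constants in hand. A secondary difficulty is tracking the laziness and flattening constants sharply enough through the iteration that the final exponent is exactly $2\log 5/\log(3/2)$ rather than something larger, and that $\alpha,\beta,t_0$ come out as honest closed-form expressions in $d$.
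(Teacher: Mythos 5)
Your proposal heads down a different road than the paper and, as written, would not deliver the stated constants. The paper does \emph{not} iterate a Bourgain--Gamburd-style bootstrap (flattening of convolution powers, escape from subgroups, scale raised by a fixed power per step); that is the machinery behind the \emph{first} part of Varj\'u's Lemma~5, which the authors explicitly set aside because its base scale $r_0$ is astronomically large and non-explicit. What they take from the \emph{second} part is a one-shot diameter argument, run separately in each irrep: for every nontrivial $\lambda$ with $\|\lambda\|_1\le 2t$ and every unit vector $w$, the Frobenius--Schur identity $\int_{G_d}\langle\pi_\lambda(g^2)w,w\rangle\,d\mu(g)=\imath_\lambda/d_\lambda$ produces an element $h$ with $\|(\pi_\lambda(h)-\pi_\lambda(h^{-1}))w\|\ge\sqrt{2(1-\imath_\lambda/d_\lambda)}$; approximating $h^2$ to precision $\epsilon\sim 1/(Ct)$ by a word of length $\ell$ in the squared generators and running the triangle inequality along the word forces some single $U_i^2$ to move $w$ by at least $b\sim 1/\ell$, whence $\mathrm{gap}_t\gtrsim b^2\sim\ell^{-2}$. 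The exponent $2c$ with $c=\log 5/\log(3/2)$ is then nothing but the \emph{Solovay--Kitaev} exponent, entering through $\ell\le\mathrm{diam}_\epsilon\lesssim\log^{c}(1/\epsilon)$ with $\epsilon\sim 1/t$, and squared because the gap scales as $\ell^{-2}$ --- not, as you claim, a product of a $\log 5$ from a lazy walk and a $\log(3/2)$ from comparing $\nu_{\mathcal{S}}^{*n}$ with $\nu_{\mathcal{S}}^{*2n}$. That misattribution matters: an iteration over scales of the kind you sketch would produce a different (and non-explicit) exponent and would reintroduce exactly the uncomputable base scale the paper is trying to eliminate.

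Second, the role of the squares hypothesis and the shape of $g_{t_0}$ are not what you describe. The hypothesis is not an ``escape at even times'' input; it guarantees that each nested set $\mathcal{S}^2_{j_1\ldots j_m}$ (all squared generators with $m$ pairs deleted, $0\le m\le k-2$) remains universal, so the displacement argument can be repeated $k-1$ times, each round handling a \emph{new} generator pair using words drawn from a set that excludes the pairs already treated. Accordingly $g_{t_0}(\mathcal{S})=\frac{1}{|\mathcal{S}|}\sum_{m}\mathrm{gap}^2_{t_0}(\mathcal{S}^2_m)$ is a sum over these nested sets, not a worst-case gap over two-element pairs. Finally, the step you yourself flag as the ``main obstacle'' --- making the base input effective --- is resolved in the paper by seeding Solovay--Kitaev with the $\epsilon_0$-net length bound $\ell_{0,m}\le\bigl((d^2-1)(2\log(1/\epsilon_0)+\log(4C_b^{3/2}d))+\log 32\bigr)/\mathrm{gap}_{t_0}(\mathcal{S}^2_m)$ at the explicit scale $t_0=5d^{5/2}\epsilon_0^{-1}\tau(\epsilon_0,d)$; your sketch leaves this open, so the proposal as it stands does not constitute a proof.
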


The quantity $g_{t_0}(\mathcal{S})$ is defined in equation (\ref{eq:gt0}), see also (\ref{eq:gapt0r}) and (\ref{eq:sind}). Crucially, we provide explicit formulas (\ref{eq:ab}) and (\ref{eq:t0}), (\ref{eq:tau}) for $\alpha=\alpha(d, \epsilon_0)$, $\beta=\beta(d)$ and $t_0=t_0(d, \epsilon_0)$, where $\epsilon_0$ is the parameter in the construction leading to different bounds. The value of $t_0$ is small enough to enable numerical calculations of $g_{t_0}(\mathcal{S})$, at least for $d=2,3$ and $4$. Hence, our bounds can be made explicit by numerical experiments for fixed $\mathcal{S}$. We provide examples of specific values of $t_0$, $\alpha$ and $\beta$, for $d=2, 3$ and $4$ in Tables \ref{tab:d23} and \ref{tab:d4}. The minimal possible values of $t_0$ are indicated by bold font and given by
\begin{equation}
t_{\mathrm{min}} \coloneqq \ceil{5 d^{5/2} /\epsilon_{0, \mathrm{max}} \cdot \tau(\epsilon_{0, \mathrm{max}},d)} \mathrm{,} \quad \epsilon_{0, \mathrm{max}} \coloneqq 1/(d+2) \mathrm{,}
\end{equation}
where $\tau(\epsilon, d)$ is defined in (\ref{eq:tau}). We present the values of $t_0$ up to the ones giving $\alpha$ around 1.

\begin{table}[H]
\begin{center}
\begin{tabular}{ |c|c|c|c| } 
 \hline
  $\epsilon_0$ & $t_0$ & $\alpha$ & $\beta$ \\ \hline 
 0.04 & 4599 & 9.68e-01 & 0.393\\ \hline
 0.05 & 3544 & 3.67e-01 & 0.393\\ \hline
 0.06 & 2860 & 1.49e-01 & 0.393\\ \hline
 0.07 & 2384 & 6.29e-02 & 0.393\\ \hline
 0.08 & 2035 & 2.72e-02 & 0.393\\ \hline
0.09 & 1769 & 1.18e-02 & 0.393\\ \hline
 0.1 & 1559 & 5.15e-03 & 0.393\\ \hline
 0.11 & 1391 & 2.22e-03 & 0.393\\ \hline
 0.12 & 1252 & 9.38e-04 & 0.393\\ \hline
 0.13 & 1137 & 3.85e-04 & 0.393\\ \hline
 0.14 & 1039 & 2.62e-04 & 0.393\\ \hline
 0.15 & 955 & 5.68e-05 & 0.393\\ \hline
 0.16 & 883 & 1.99e-05 & 0.393\\ \hline
 0.17 & 820 & 6.36e-06 & 0.393\\ \hline
 $ \lessapprox$ \textbf{0.25} & \textbf{509} &   $\gtrapprox$ \textbf{0} & \textbf{0.393} \\ \hline
\end{tabular}, \quad \begin{tabular}{ |c|c|c|c| } 
 \hline
  $\epsilon_0$ & $t_0$ & $\alpha$ & $\beta$ \\ \hline 
0.02 & 29199 & 7.25e-01 & 0.251\\ \hline
 0.03 & 18353 & 1.73e-01 & 0.251\\ \hline
 0.04 & 13170 & 5.07e-02 & 0.251\\ \hline
0.05 & 10166 & 1.65e-02 & 0.251\\ \hline
 0.06 & 8219 & 5.69e-03 & 0.251\\ \hline
 0.07 & 6861 & 2.01e-03 & 0.251\\ \hline
 0.08 & 5864 & 7.10e-04 & 0.251\\ \hline
0.09 & 5103 & 2.47e-04& 0.251\\ \hline
 0.1 & 4504 & 8.31e-05 & 0.251\\ \hline
 0.11 & 4022 & 2.65e-05 & 0.251\\ \hline
 0.12 & 3625 & 7.81e-06 & 0.251\\ \hline
 0.13 & 3295 & 2.07e-06 & 0.251\\ \hline
 0.14 & 3014 & 4.75e-07 & 0.251\\ \hline
 0.15 & 2775 & 8.82e-08 & 0.251\\ \hline
 $\lessapprox$ \textbf{0.20} & \textbf{1958} &   $\gtrapprox$ \textbf{0} & \textbf{0.251}\\ \hline
\end{tabular}
\end{center}
\caption{Examples of values of $t_0$, $\alpha$ and $\beta$ for $d=2$ (left) and $d=3$ (right). The parameter $\epsilon_0$ is an element of the construction determining $t_0$ (along with $d$). Bold font indicates the choice of the smallest possible $t_0$.}
\label{tab:d23}
\end{table}

\begin{table}[H]
\begin{center}
\begin{tabular}{ |c|c|c|c| } 
 \hline
  $\epsilon_0$ & $t_0$ & $\alpha$ & $\beta$ \\ \hline 
 0.01 & 134232 & 8.46e-01 & 0.175\\ \hline
 0.02 & 61313 & 1.06e-01 & 0.175\\ \hline
 0.03 & 38602 & 2.18e-02 & 0.175\\ \hline
 0.04 & 27738 & 5.49e-03 & 0.175\\ \hline
 0.05 & 21435 & 1.52e-03 & 0.175\\ \hline
 0.06 & 17347 & 4.34e-04 & 0.175\\ \hline
 0.07 & 14494 & 1.24e-04 & 0.175\\ \hline
 0.08 & 12398 & 3.43e-05 & 0.175\\ \hline
 0.09 & 10797 & 8.86e-06 & 0.175\\ \hline
 0.1 & 9537 & 2.07e-06 & 0.175\\ \hline
 0.11 & 8522 & 4.14e-07 & 0.175\\ \hline
 0.12 & 7687 & 6.59e-08 & 0.175\\ \hline
 0.13 & 6990 & 7.37e-09 & 0.175\\ \hline
 0.14 & 6400 & 4.54e-10 & 0.175\\ \hline
  $\lessapprox$ \textbf{0.1(6)} & \textbf{5195} &   $\gtrapprox$ \textbf{0} & \textbf{0.175}\\ \hline
\end{tabular}
\end{center}
\caption{Examples of values of $t_0$, $\alpha$ and $\beta$ for $d=4$. The parameter $\epsilon_0$ is an element of the construction determining $t_0$ (along with $d$). Bold font indicates the choice of the smallest possible $t_0$.}
\label{tab:d4}
\end{table}

The value of $\alpha$ grows quickly with $t_0$ as can be seen in Fig. \ref{fig:alpha}. Values of $\beta$ and $c$ do not depend on $t_0$. On the other hand, the value of $g_{t_0}(\mathcal{S})$ can decrease with increasing $t_0$.

\begin{center}
\includegraphics[width=0.7\linewidth]{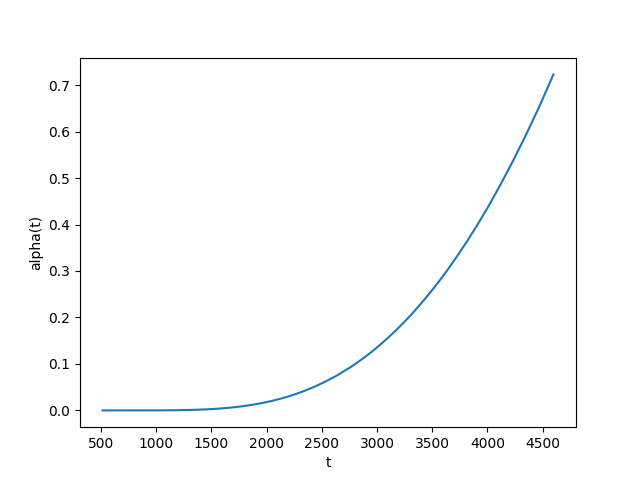}
\captionof{figure}{The value of $\alpha$ as a function of $t_0$ for $d=2$.}
\label{fig:alpha}
\end{center}

In order to check the behaviour of our bound (\ref{eq:main}) and demonstrate that it can be calculated on existing hardware, we performed a numerical simulation on a supercomputer. For the sake of this simulation, we chose 1000 Haar random sets $\mathcal{S}$ for $d=2$, each consisting of three gates and their inverses. The computations took approximately two weeks and utilized 1008 CPU cores. We calculated the values of the lower bound for $t_0$ ranging from 550 to 900 (with increment 10) and plotted the bounds for $t$ from $t_0$ to 1000. We also calculated the ratio of our bound and the true value of the gap at given $t$. We present those results averaged over all sets $\mathcal{S}$ in Fig. \ref{fig:gap_ratio}.

\begin{center}
\includegraphics[width=0.7\linewidth]{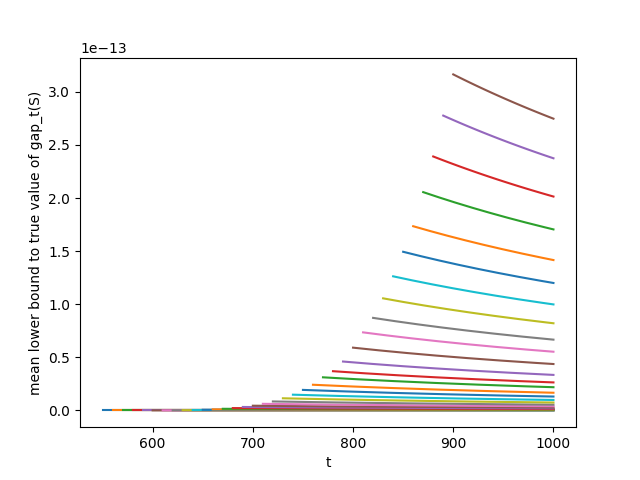}
\captionof{figure}{Ratio of lower bound (\ref{eq:main}) to true value of gap as a function of $t \in [t_0, 1000]$. The ratio was averaged over 1000 Haar random sets $\mathcal{S}$ with $d=2$ and 3 gates on each set together with inverses. Each line corresponds to a lower bound calculated for different value of $t_0 \in [550, 900]$ with increment 10.}
\label{fig:gap_ratio}
\end{center}
The value of the lower bound looks qualitatively the same as the ratio in Fig. \ref{fig:gap_ratio} rescaled by a constant. This is because the true value of the gap is practically constant for any chosen $\mathcal{S}$ in the inspected range of $t$. From Fig. \ref{fig:gap_ratio} it is clear that our bound is far from being tight, at least in a tested range. However, obtained results are not far from our expectations taking into account the generality of our bounds. Moreover, evidently, the lower bounds improve quickly with $t_0$, due to the rising value of constant $\alpha$ that dominates possible deterioration of $g_{t_0}(\mathcal{S})$. In fact, the value of $g_{t_0}(\mathcal{S})$ is also constant for any $\mathcal{S}$ in the inspected range. Needless to say, such improvement cannot continue indefinitely, since the ratio must be at most 1. Unfortunately, we didn't have enough resources to push our simulations further.

The structure of this paper is as follows. In Section \ref{sec:ao} we introduce the mathematics used in the paper, such as the averaging operators and their relevant spectral gaps. In Section \ref{sec:bog} we provide an alternative proof of the efficiency bound (\ref{eq:ah}) from \cite{harrow02} with $A$ and $B$ proportional to $\mathrm{gap}^{-1}(\mathcal{S})$. In Section \ref{sec:mth} we present the proof of our main result, Theorem \ref{th:main}. 

\section{Averaging operators and their spectral gaps}
\label{sec:ao}
By $G_d$ we denote the projective unitary group $\mathrm{PU}(d)$, which is the quotient of the unitary group $\mathrm{U}(d)$ by its center \begin{equation}
 \mathrm{U}(1)=\{e^{i \theta}|\, \theta \in [0, 2 \pi) \} \mathrm{.}  
\end{equation}
Consider the space $L^2(G_d)$ of square integrable complex functions on $G_d$ with respect to $\mu$, equipped with the standard scalar product $\langle \cdot, \cdot \rangle$ (linear on the second slot). Since $G_d$ is compact we consider only unitary representations. A group $G_d$ acts on $L^2(G_d)$ via (left) regular representation $\mathrm{Reg}$. Given a function $f \in L^2(G_d)$ and element $g \in G_d$
\begin{equation}
(\mathrm{Reg}(g) f)(x)=f(g^{-1}x) \mathrm{,}    
\end{equation}
so the regular representation acts on functions by shifts.
Regular representation is not irreducible. In fact, due to Peter-Weyl theorem, it decomposes into an orthogonal direct sum of all the irreducible unitary representations (irreps) with multiplicities equal to their dimensions
\begin{equation}
L^2(G_d) = \cbigoplus_{\lambda \in \Lambda} V_{\lambda}^{ \oplus d_{\lambda} } \mathrm{,}  
\end{equation}
where $\Lambda$ is the set of highest weights of $G_d$ (enumerating all irreps up to isomorphism), $V_{\lambda}$ is the representation space of irrep $\pi_{\lambda}$ with highest weight $\lambda$ and dimension $d_{\lambda}$ and hat denotes the closure of an infinite direct sum. Moreover for each $\lambda \in \Lambda$
\begin{equation}
\left\{ \sqrt{d_{\lambda}}(\pi_{\lambda})_{ij}| 1\leq i, j \leq d_{\lambda}  \right\} \mathrm{,}
\end{equation}
is an orthonormal basis of $V_{\lambda}$ where matrix elements $(\pi_{\lambda})_{ij}$ are functions in $L^2(G_d)$ given by
\begin{equation}
(\pi_{\lambda})_{ij}(g) \coloneqq \langle e_i, \pi_{\lambda}(g) e_j \rangle \mathrm{,}
\end{equation}
for some fixed orthonormal basis of $V_{\lambda}$, $\{e_k | 1 \leq k \leq d_{\lambda}\}$.
Clearly, sum of all such basis form an orthonormal basis of $L^2(G_d)$
\begin{equation}
\label{eq:fbasis}
\left\{ \sqrt{d_{\lambda}}(\pi_{\lambda})_{ij}| \lambda \in \Lambda, 1\leq i, j \leq d_{\lambda}  \right\} \mathrm{.}
\end{equation}
Hence any function $f \in V_{\lambda}$, as a linear combination of matrix elements, is given by 
\begin{equation}
f(g) = \mathrm{Tr}[A \pi_{\lambda}(g)], \quad g \in G_d
\end{equation}
for some complex $d_{\lambda} \times d_{\lambda}$ matrix $A$. The regular representation restricted to functions in $V_{\lambda}$ is isomorphic to representation $\pi_{\lambda}$. If $\pi$ is any (possibly reducible) representation of $G_d$, then $\pi$ is isomorphic to the direct sum of irreps which can be identified with function spaces from Peter-Weyl decomposition $V_{\lambda_1}^{\otimes m_1} \oplus \ldots \oplus V_{\lambda_k}^{\otimes m_k}$, for some $k \geq 1$ and multiplicities $m_i \geq 1$. If $m_i \leq d_{\lambda_i}$ for all $1 \leq i \leq k$, then representation $\pi$ will appear as a subrepresentation of $L^2(G)$. The corresponding space of functions consists of all functions obtained via
\begin{equation}
f(g) = \mathrm{Tr}[A \pi(g)], \quad g \in G_d
\end{equation}
for all matrices $A$.

We now comment on how one may naturally choose a scale up to which one would like to consider irreps of $G_d$.

The Lie algebra of $G_d$ is isomorphic to $\mathfrak{su}(d)$ since \begin{equation}
 G_d = \mathrm{PU}(d)\cong \mathrm{PSU(d)}=\mathrm{SU(d)}/Z(\mathrm{SU}(d)) \mathrm{,}   
\end{equation} where $Z(\mathrm{SU}(d)) \simeq \mathbb{Z}_d$, the center of $\mathrm{SU}(d)$, is discrete.

The adjoint representation $\mathrm{Ad}$ of $\mathrm{U}(d)$ descents into the quotient group $G_d$ forming the adjoint representation $\mathrm{Ad}$ of a group $G_d$ acting on its representation space $\mathfrak{su}(d)$ via
\begin{equation}
\mathrm{Ad}_{U}(X)=\hat{U} X \hat{U}^{-1}, \quad U \in G_d, X \in \mathfrak{su}(d) \mathrm{,}   
\end{equation}
where $\hat{U} \in \mathrm{U}(d)$ is any representative of $U \in G_d$. Importantly $\mathrm{Ad}$ is faithful hence every representation of $G_d$ is realized inside $\mathrm{Ad}^{\otimes n}$ for $n$ large enough. The defining representation $U$ of $\mathrm{U}(d)$ does not descend into a well-defined representation $U$ of $G_d$ but $U \otimes \overline{U}$ does, where $\overline{U}$ is the adjoint of $U$. In fact,
\begin{equation}
U \otimes \overline{U} \cong \mathrm{Ad} \oplus I \mathrm{,}    
\end{equation}
where by $I$ we denote the one-dimensional trivial representation.
Thus each irrep of $G_d$ appears in rep
$(U \otimes \overline{U})^{\otimes t}$
for some $t$. Moreover this rep contains only projective irreps of $\mathrm{U}(d)$ hence reps of $G_d$.

Consider $t \geq 2$. Then
\begin{equation}
(U \otimes \overline{U})^{\otimes t} \cong (U \otimes \overline{U})^{\otimes t-1} \otimes (\mathrm{Ad} \oplus I) \cong [(U \otimes \overline{U})^{\otimes t-1} \otimes \mathrm{Ad}] \oplus (U \otimes \overline{U})^{\otimes t-1} 
\end{equation}
and applying this reasoning inductively we see that all irreps of $(U \otimes \overline{U})^{\otimes s}$ appear in $(U \otimes \overline{U})^{\otimes t}$ for $s \leq t$. Thus we see that with $t$ increasing the rep $(U \otimes \overline{U})^{\otimes t}$ contains more and more irreps of $G_d$ and each irrep of $G_d$ is contained in this rep for $t$ large enough. In the language of Peter-Weyl theorem the corresponding functions in $L^2(G)$ are
\begin{equation}
f(U)=\mathrm{Tr}[A (U \otimes \overline{U})^{\otimes t}], g \in G   
\end{equation}
so they are balanced polynomials in $U$ and $\overline{U}$ of degree $t$. Thus increasing $t$ corresponds to considering polynomials with higher degrees. This motivates us to consider the following function spaces in $L^2(G_d)$,
\begin{equation}
L^2_t(G_d)=\bigoplus_{\lambda \in \Lambda_t}V_{\lambda} \mathrm{,} 
\end{equation}
where $\Lambda_t$ is the set of unique (i.e. without repetitions and up to isomorphism) highest weights of irreps of $G_d$ appearing in $(U \otimes \overline{U})^{\otimes t}$. In the case $t=0$, we set
\begin{equation}
L^2_0(G_d) = I \mathrm{.}   
\end{equation}

Additionally we define the following related symbols. The set $\tilde{\Lambda}_t$ which equals $\Lambda_t$ without the weight of the trivial representation and the set of all unique highest weights
\begin{equation}
\Lambda_{\infty} \coloneqq \bigcup_{t=0}^{\infty} \Lambda_t \mathrm{.}    
\end{equation}
Fortunately the weights $\Lambda_t$ have a nice description in terms of the sequences of integers.

\begin{lemma}
\label{le:weights}
The set $\Lambda_t$ consists precisely of weights indexed by nonincreasing length $d$ integer sequences $\lambda$ such that $|\lambda|=0$ and $|\lambda_{+}| \leq t$, where $|\lambda|$ denotes the sum of entries and $\lambda_+$ is the subsequence of positive entries.
\end{lemma}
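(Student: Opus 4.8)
\emph{Plan.} The whole statement should follow by translating everything into the classical representation theory of $\mathrm{U}(d)$. Recall that the irreducibles of $\mathrm{U}(d)$ are labelled by dominant weights, i.e.\ nonincreasing integer sequences $\lambda=(\lambda_1\ge\cdots\ge\lambda_d)$, that $V_\lambda$ descends to $G_d=\mathrm{PU}(d)$ exactly when the center $\mathrm{U}(1)=\{e^{i\theta}\bone\}$ acts trivially on it, and that the center acts on $V_\lambda$ by the scalar $e^{i\theta|\lambda|}$; hence every highest weight occurring in the problem automatically satisfies $|\lambda|=0$, which disposes of one of the two conditions. By the discussion just before the lemma, $\Lambda_t$ is the union over $s\le t$ of the sets of highest weights of the irreducible constituents of $(U\otimes\overline U)^{\otimes s}$, so it remains to prove the equivalence of the occurrence of $V_\lambda$ in some $(U\otimes\overline U)^{\otimes s}$, $s\le t$, with the inequality $|\lambda_+|\le t$.

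For the forward inclusion I would argue purely at the level of weights. If $V_\lambda$ sits inside $(U\otimes\overline U)^{\otimes t}=U^{\otimes t}\otimes\overline U^{\otimes t}$, then $\lambda$, being the weight of the highest weight vector, is a weight of this representation. The weights of $U^{\otimes t}$ are precisely the vectors $a=(a_1,\dots,a_d)\in\ZZ_{\ge0}^d$ with $\sum_i a_i=t$, and those of $\overline U^{\otimes t}$ are their negatives, so $\lambda=a-b$ with $a,b\in\ZZ_{\ge0}^d$, $|a|=|b|=t$. Summing over the coordinates where $\lambda_i>0$ gives $|\lambda_+|=\sum_{\lambda_i>0}(a_i-b_i)\le\sum_{\lambda_i>0}a_i\le|a|=t$, as required.

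For the reverse inclusion I would exhibit an explicit copy of $V_\lambda$. Given $\lambda$ nonincreasing with $|\lambda|=0$ and $|\lambda_+|\le t$, put $s:=|\lambda_+|=|\lambda_-|\le t$ (the second equality uses $|\lambda|=0$), let $p$ and $q$ be the numbers of positive and negative entries of $\lambda$ (so $p+q\le d$), and view $\lambda_+:=(\lambda_1,\dots,\lambda_p,0,\dots,0)$ and $\lambda_-:=(-\lambda_d,\dots,-\lambda_{d-q+1},0,\dots,0)$ as partitions of $s$ with at most $d$ rows. By Schur--Weyl duality (or a one-line Pieri induction) $V_{\lambda_+}$ occurs in $U^{\otimes s}$ and $V_{\lambda_-}$ occurs in $U^{\otimes s}$, hence $V_{\lambda_-}^{\,*}\cong\overline{V_{\lambda_-}}$ occurs in $\overline U^{\otimes s}$, so $V_{\lambda_+}\otimes V_{\lambda_-}^{\,*}$ is a subrepresentation of $(U\otimes\overline U)^{\otimes s}$. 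Taking a highest weight vector $v\in V_{\lambda_+}$ of weight $(\lambda_1,\dots,\lambda_p,0,\dots,0)$ and a highest weight vector $w\in V_{\lambda_-}^{\,*}$ of weight $(0,\dots,0,\lambda_{d-q+1},\dots,\lambda_d)$, the vector $v\otimes w$ is killed by every raising operator and — because $p+q\le d$ makes the two weights have disjoint support — has weight exactly $\lambda$; hence $V_\lambda\subseteq V_{\lambda_+}\otimes V_{\lambda_-}^{\,*}\subseteq(U\otimes\overline U)^{\otimes s}$ with $s\le t$, so $\lambda\in\Lambda_t$.

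The main obstacle is the index bookkeeping in this last step: correctly identifying the highest weight of the dual $V_{\lambda_-}^{\,*}$ as it sits inside $\overline U^{\otimes s}$, and checking that the supports of the two highest weights really are disjoint so that their sum is $\lambda$ — this is where one genuinely uses that $\lambda$ is a length-$d$ sequence with $p$ positive and $q$ negative entries and $p+q\le d$. The remaining ingredients (the center action, the weight description of $U^{\otimes t}$, Schur--Weyl duality, and the fact that a tensor product of highest weight vectors is again a highest weight vector) are standard, so I expect no further difficulty.
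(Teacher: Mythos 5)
Your proof is correct and complete. The paper actually states Lemma \ref{le:weights} without proof, so there is nothing to compare against; your argument supplies exactly what is missing. Both directions check out: the forward inclusion via the weight decomposition of $U^{\otimes t}\otimes\overline{U}^{\otimes t}$ (every weight is $a-b$ with $a,b\in\mathbb{Z}_{\geq 0}^d$, $|a|=|b|=t$, giving $|\lambda|=0$ and $|\lambda_+|\leq t$), and the reverse inclusion via Schur--Weyl plus the tensor product of highest weight vectors of disjointly supported weights, which lands $V_\lambda$ inside $(U\otimes\overline{U})^{\otimes s}$ with $s=|\lambda_+|\leq t$ and hence, by the paper's observation that constituents of lower powers persist in higher ones, inside $(U\otimes\overline{U})^{\otimes t}$. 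The one bookkeeping point you flagged --- that the highest weight of $V_{\lambda_-}^{*}$ inside $\overline{U}^{\otimes s}$ is the negated reversal $(0,\ldots,0,\lambda_{d-q+1},\ldots,\lambda_d)$ --- is handled correctly.
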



Each sequence $\lambda=(\lambda_1, \ldots, \lambda_d) \in \Lambda_t$ corresponds to a weight (the linear functional on the Cartan subalgebra $\mathfrak{h} \subset \mathfrak{su}(d)$)
\begin{equation}
\lambda=\lambda_1 L_1 + \ldots \lambda_d L_d \mathrm{,} 
\end{equation}
where $L_i$ are the standard basis elements \footnote{The linear functional $L_i$ returns the $i$-th diagonal entry of a matrix in $\mathfrak{h}$.}. Since $L_1 + \ldots + L_d=0$ in $\mathfrak{h}^{*}$, adding a constant sequence, $(c, \ldots, c)$ for some $c \in \mathbb{Z}$, to $\lambda$ does not change the weight.
\begin{example}
Consider the system of two qutrits $\mathbb{C}^3 \otimes \mathbb{C}^3$ and $t=2$. Then, from Lemma \ref{le:weights}, we have
\begin{equation}
\Lambda_t=\{(2, 0, -2), (2, -1, -1), (1, 1, -2), (1, 0 ,-1), (0, 0 ,0)\}   
\end{equation}
which is equivalent to
\begin{equation}
\Lambda_t=\{(4, 2, 0), (3, 0, 0), (3, 3, 0), (2, 1 ,0), (0, 0 ,0)\} \mathrm{,}    
\end{equation}
and for example $\lambda = (2, 1, 0)$ corresponds to the highest weight $2 L_1 + L_2$ i.e. to the adjoint representation. Similarly we can represent $\tilde{\Lambda}_t$ as
\begin{equation}
\tilde{\Lambda}_t=\{(4, 2, 0), (3, 0, 0), (3, 3, 0), (2, 1 ,0)\} \mathrm{.}   
\end{equation}
\end{example}
We introduce the following norm on the space of weights of $G_d$
\begin{equation}
    ||\lambda||_1 \coloneqq \sum_{i=1}^d |\lambda_i| \mathrm{.}
\end{equation}
 It is clear that for each $\lambda \in \Lambda_t$,
\begin{equation}
\label{eq:bn}
||\lambda||_1 \leq 2t \mathrm{.}  
\end{equation}
From now on we represent each irrep $\lambda$ by the sequence with smallest $||\lambda||_1$. In particular, the trivial representation is given by $\lambda=(0,0,\ldots, 0)$.

By choosing the orthonormal basis of function spaces (\ref{eq:fbasis}) we have the isomorphisms
\begin{equation}
V_{\lambda} \simeq \mathcal{H}_{\lambda} \mathrm{,}
\end{equation} where $\mathcal{H}_{\lambda} \coloneqq \mathbb{C}^{d_{\lambda}}$. We define \begin{equation}
 \mathcal{H}_t \coloneqq \bigoplus_{\lambda \in \Lambda_t}\mathcal{H}_{\lambda} \simeq L^2_t(G_d) \mathrm{,}    
\end{equation}
and analogously we define $\mathcal{H}_{\infty}$ \footnote{Here we use the closure of the direct sum.}. By $\mathcal{H}$ we denote the vector space isomorphic to $L^2(G_d)$.
Clearly,
\begin{equation}
 L^2(G_d) \simeq \mathcal{H} \coloneqq \bigoplus_{\lambda \in \Lambda_{\infty}}\mathcal{H}_{\lambda}^{\oplus d_{\lambda}} \mathrm{.}   
\end{equation}

For any representation of $G_d$ and any finite Borel measure $\nu$ on $G_d$ we define the operator 
\begin{equation}
\label{eq:rhoavg}
\pi(\nu) \coloneqq \int_{G_d} \pi(g) d \nu(g) \mathrm{,}  
\end{equation}
acting on the representation space of $\pi$. We use can use (\ref{eq:rhoavg}) to define various averaging operators. By $\mathcal{S}$ we denote a finite set of generators of $G_d$ and $\nu_{\mathcal{S}}$ is the counting measure of $\mathcal{S}$ on $G_d$.

The $t$-averaging operator wrt to $\mathcal{S}$, 
$T_{\nu_{\mathcal{S}},t}: \mathcal{H}_t\rightarrow \mathcal{H}_t$ is
\begin{equation}
\label{eq:def1}
 T_{\nu_{\mathcal{S}},t} \coloneqq \bigoplus_{\lambda \in \Lambda_t}\pi_{\lambda}(\nu_S) \mathrm{,}
\end{equation}
and can be represented as a block-diagonal matrix.
Analogously we define the $\infty$-averaging operator wrt to $\mathcal{S}$,
$T_{\nu_{\mathcal{S}},\infty}: \mathcal{H}_{\infty}\rightarrow \mathcal{H}_{\infty}$,
\begin{equation}
 T_{\nu_{\mathcal{S}}, \infty} \coloneqq \bigoplus_{\lambda \in \Lambda_{\infty}}\pi_{\lambda}(\nu_S) \mathrm{.}
\end{equation}
Finally, the (global) averaging operator wrt to $\mathcal{S}$, $T_{\nu_{\mathcal{S}}}: \mathcal{H}\rightarrow \mathcal{H}$ is
\begin{equation}
 T_{\nu_{\mathcal{S}}} \coloneqq \bigoplus_{\lambda \in \Lambda_{\infty}}\pi_{\lambda}(\nu_S)^{\oplus d_{\lambda}} \mathrm{.}
\end{equation}
In the language of functions, introduced averaging operators correspond to restrictions of $\mathrm{Reg}(\nu_{\mathcal{S}})$ to corresponding function subspaces. We denote such isomorphic averaging operators using the same symbols. For example, the global averaging operator is
\begin{equation}
 T_{\nu_{\mathcal{S}}} = \mathrm{Reg}(\nu_{\mathcal{S}}) \mathrm{,}
 \end{equation}
 so the action on $f \in L^2(G_d)$ is
 \begin{equation}
 \label{eq:int}
 (T_{\nu_{\mathcal{S}}} f)(h)= \int_{G_d} (\mathrm{Reg} (g) f)(h) d \nu_{\mathcal{S}}(g) = \frac{1}{|\mathcal{S}|}  \sum_{i=1}^{|\mathcal{S}|} f(g_i^{-1} h) \mathrm{.}
\end{equation}

The justification for the name averaging operator is clear from (\ref{eq:int}). Indeed,  $T_{\nu_{\mathcal{S}}}$ replaces the function $f$ with the averaged function, whose value at $h$ is the average of the values of $f$ over all translates of $h$ by the elements of $\mathcal{S}$.

Similarly, the $t$-averaging operator is
\begin{equation}
T_{\nu_{\mathcal{S}},t} = \restr{\mathrm{Reg}(\nu_{\mathcal{S}}) }{L^2_t(G_d)} \mathrm{,}    
\end{equation} 
so it acts just like $T_{\nu_{\mathcal{S}}}$ but on a restricted domain of functions.


Since $T_{\nu_{\mathcal{S}},t}$ is a sum of $|\mathcal{S}|$ left shift operators, normalized by $1/|\mathcal{S}|$,
and due to left-invariance of Haar measure, each such operator is unitary on $L^2_t(G_d)$, we see that $||T_{\nu_{\mathcal{S}},t}||_{op} \leq 1$, where by $||\cdot||_{op}$ we denote the operator norm. On the other hand $T_{\nu_{\mathcal{S}},t}$ acts trivially on $\mathcal{H}_{\lambda_0}$, where $\lambda_0=(0,0,\ldots,0)$ so $||T_{\nu_{\mathcal{S}},t}||_{op} \geq 1$ and hence $||T_{\nu_{\mathcal{S}},t}||_{op}=1$.

The subspace $\mathcal{H}_{\lambda_0}$ corresponds to the subspace of constant functions $L^2_0(G_d)=V_{\lambda_0}$, with orthogonal compliment being the space of functions with Haar-average zero. Let $T_{\mu}$ denote the projector onto $\mathcal{H}_{\lambda_0}$. At the level of function spaces, $T_{\mu}$ is the projector onto $L^2_0(G_d)$ which assigns to each function $f$ the constant function with value being the Haar average \footnote{From now on the symbol $\mu$ denotes the Haar measure on $G_d$.} of $f$ ,
\begin{equation}
(T_{\mu}f)(h)=\int_{G_d} f (g) d \mu(g) \mathrm{.}    
\end{equation}

    

In order to assess how quick the words in $\mathcal{S}$ fill the group $G_d$, we compare the averaging operator $T_{\nu_{\mathcal{S}}}$ with $T_{\mu}$ by checking the operator of their difference. Since $\restr{T_{\nu_{\mathcal{S}}}}{\mathcal{H}_{\lambda_0}}=\restr{T_{\mu}}{\mathcal{H}_{\lambda_0}}$, the norm $||T_{\nu_{\mathcal{S}}} - T_{\mu}||_{op}$ equals the norm of the operator
\begin{equation}
\tilde{T}_{\nu_{\mathcal{S}}} \coloneqq  \bigoplus_{\lambda \in \tilde{\Lambda}_{\infty}}\pi_{\lambda}(\nu_S)^{\oplus d_{\lambda}} \mathrm{.}    
\end{equation}
Similarly, we define $\tilde{T}_{\nu_{\mathcal{S}}, t}$ and $\tilde{T}_{\nu_{\mathcal{S}}, \infty}$.
Clearly, 
\begin{equation}
  \sigma(T_{\nu_{\mathcal{S}}}) = \sigma(T_{\nu_{\mathcal{S}}, \infty}) \mathrm{,}  
\end{equation} so we have
\begin{equation}
 ||T_{\nu_{\mathcal{S}}} - T_{\mu}||_{op}=||\tilde{T}_{\nu_{\mathcal{S}}}||_{op}=||\tilde{T}_{\nu_{\mathcal{S}},\infty}||_{op} \mathrm{.}   
\end{equation}
This motivates us to define the spectral gap of $\mathcal{S}$ as
\begin{equation}
\label{eq:g3}
\mathrm{gap}(\mathcal{S}) \coloneqq  1-||\tilde{T}_{\nu_{\mathcal{S}}}||_{op} \mathrm{.}
\end{equation}

The spectral gap is an useful numerical value describing the set $\mathcal{S}$ via the properties of the corresponding averaging operator.

Similarly, we define spectral gap of $\mathcal{S}$ at scale $t$ as
\begin{equation}
\label{eq:g2}
\mathrm{gap_t}(\mathcal{S}) \coloneqq 1-||\tilde{T}_{\nu_{\mathcal{S}},t}||_{op} \mathrm{.}   
\end{equation}
In general we can define analogous gaps for any finite Borel measure $\nu$ on $G_d$. For example,
\begin{equation}
\label{eq:g1}
\mathrm{gap_t}(\nu) \coloneqq 1-||\tilde{T}_{\nu,t}||_{op} \mathrm{,}   
\end{equation}
where $\tilde{T}_{\nu,t}$ is defined as in (\ref{eq:def1}) with $\nu_{\mathcal{S}}$ substituted by $\nu$.
It is clear that
\begin{equation}
\mathrm{gap}(\mathcal{S}) =\inf_{t} \mathrm{gap_t}(\mathcal{S}) \mathrm{,}   
\end{equation}
and the gaps (\ref{eq:g3})-(\ref{eq:g1}) belong to $[0,1]$.

We argue that we can assume that $\mathcal{S}$ is symmetric without the loss of generality. For a measure $\nu$ on $G_d$ we define its conjugate $\tilde{\nu}$ via the property
\begin{equation}
    \int_{G_d} f(g) d\tilde{\nu}(g) =  \int_{G_d} f(g^{-1}) d\nu(g)
\end{equation}
for all continuous functions $f$ on $G_d$. We say a measure $\nu$ is symmetric if $\nu = \tilde{\nu}$. For two measures $\nu_1$ and $\nu_2$ on $G_d$, their convolution $\nu_1 * \nu_2$ is a measure on $G_d$ defined via
\begin{equation}
\nu_1 * \nu_2(\Omega) = \int_{G_d} \mathbbm{1}_\Omega(gh) \nu_1(g) \nu_2(h) \mathrm{.}
\end{equation}
Going back to the definition (\ref{eq:rhoavg}) we have
\begin{equation}
\pi(\nu_1 * \nu_2) = \pi(\nu_1) \pi(\nu_2) \mathrm{.}    
\end{equation}
It is easy to see that $\pi(\tilde{\nu})=\pi(\nu)^*$. In particular if $\nu$ is symmetric then $\pi(\nu)$ is self-adjoint and hence $\sigma(\pi(\nu))$ is real. Note also that $\nu * \tilde{\nu}$ is automatically symmetric. We can write

\begin{equation}
\pi(\tilde{\nu} * \nu)=\pi(\tilde{\nu}) \cdot \pi(\nu)=\pi(\nu)^* \cdot \pi(\nu) \mathrm{,}   
\end{equation}
which means that
\begin{equation}
||\pi(\tilde{\nu} * \nu)||_{op}=||\pi(\nu)||_{op}^2 \mathrm{.}    
\end{equation}
Finally, because $\sqrt{1-x} \leq 1-\frac{x}{2}$ for any $0 \leq x \leq 1$,
\begin{equation}
\mathrm{gap}_t( \tilde{\nu}_{\mathcal{S}} * \nu_{\mathcal{S}}) \geq \mathrm{gap}_t(\mathcal{S}) \geq \frac{1}{2}\mathrm{gap}_t( \tilde{\nu}_{\mathcal{S}} * \nu_{\mathcal{S}}) \mathrm{.}   
\end{equation}

Since $\mathcal{S}$ is symmetric, $T_{\nu_{\mathcal{S}},t}$ is Hermitian and so its spectrum $\sigma(T_{\nu_{\mathcal{S}},t})$ is contained in $[-1,1]$. The same is true for $T_{\nu_{\mathrm{S}},\infty}$ and  $T_{\nu_{\mathrm{S}}}$. Note that since the subspace $\mathcal{H}_{\lambda_0}$ is excluded, the question if $\mathrm{gap}(\mathcal{S}) > 0$ is non-trivial. The gap exists, i.e.  $\mathrm{gap}(\mathcal{S}) > 0$, if and only if $1$ belongs to the spectrum $\sigma(T_{\nu_{\mathcal{S}}})$ i.e. it is the accumulation point of $\sigma(T_{\nu_{\mathcal{S}}})$. In such a case we say that $T_{\nu_{\mathcal{S}}}$ has a spectral gap.


Let's denote by $\mathcal{S_{\ell}}$ a set of words in $G_d$ of length $\ell$ built from elements of $\mathcal{S}$
\begin{equation}
 \mathcal{S_{\ell}} \coloneqq \{g_1 g_2 \ldots g_{\ell} |\, g_1, g_2, \ldots g_\ell \in \mathcal{S}\} \mathrm{.}
 \end{equation}
 The corresponding averaging operator is $T_{\nu_{\mathcal{S}}}^{\ell}$. Indeed,
\begin{equation}
 T_{\nu_{\mathcal{S}}}^{\ell}=T_{\nu_{\mathcal{S}}^{*(\ell)}} \mathrm{,}
\end{equation}
and since $\nu_{\mathcal{S}}^{*(\ell)}$ is the law for $\mathcal{S}_{\ell}$, $T_{\nu_{\mathcal{S}}}^{\ell}$ is the averaging operator with respect to $\mathcal{S}_{\ell}$.
At the level of functions we have
\begin{equation}
(T_{\nu_{\mathcal{S}}}^{\ell} f)(h)=\frac{1}{|\mathcal{S}|^{\ell}}  \sum_{w\in S_{\ell}}  f(w^{-1}g) \mathrm{.}   
\end{equation}
Importantly, $\mathrm{gap}(\mathcal{S})$ can be interpreted as the exponential rate of convergence of the (global) averaging operator $T_{\nu_{\mathcal{S}}}$ to $T_{\mu}$ in the operator norm with $\ell$ increasing. Indeed, due to left-invariance of Haar measure
\begin{equation}
 T_{\nu_{\mathcal{S}}} T_{\mu}=T_{\mu}=T_{\mu} T_{\nu_{\mathcal{S}}} \mathrm{,}   
\end{equation} so we have
\begin{equation}
 ||T_{\nu_{\mathcal{S}}}^{\ell}-T_{\mu}||_{op}= ||(T_{\nu_{\mathcal{S}}}-T_{\mu})^{\ell}||_{op} \leq ||(T_{\nu_{\mathcal{S}}}-T_{\mu})||_{op}^{\ell}=||\tilde{T}_{\nu_{\mathcal{S}}}||_{op}^{\ell}\mathrm{,}   
\end{equation}
and using the notion of a spectral gap (\ref{eq:g3}) we have
\begin{equation}
 ||T_{\nu_{\mathcal{S}}}^{\ell}-T_{\mu}||_{op} \leq (1-\mathrm{gap}(\mathcal{S}))^{\ell}\leq e^{- \ell \, \mathrm{gap}(\mathcal{S})}  \mathrm{.}   
\end{equation}


Thus, if the gap exists then $T_{\nu_{\mathcal{S}}}^{\ell}$ converges to $T_{\mu}$ as $\ell \to \infty$ exponentially fast in the operator norm. Moreover, the rate of convergence improves exponentially with $\mathrm{gap}(\mathcal{S})$ increasing. This motivates us to study $\mathrm{gap}(\mathcal{S})$.
\section{Bound on gates efficiency from spectral gap}
\label{sec:bog}

In \cite{harrow02} it has been shown in case of group $\mathrm{SU}(d)$ that if $\mathcal{S}$ is universal and $T_{\nu_{\mathcal{S}}}$ has a spectral gap then words of length $\ell = \mathcal{O}\left(\mathrm{log}\left( \frac{1}{\epsilon}\right)\right)$ form an $\epsilon$-net.

In this section we present an alternative proof of this fact for $G_d$. By $D(\cdot, \cdot)$ we denote a $G_d$-invariant metric on $G_d$ defined as follows. For $g, h \in G_d$ let $\hat{g}, \hat{h} \in \mathrm{U}(d)$ be the corresponding representatives. Then
\begin{equation}
 D(g, h) \coloneqq \mathrm{inf}_{\theta \in [0, 2 \pi)} ||e^{i \theta}\hat{g} - \hat{h}||_{op}\mathrm{.}  
\end{equation}
  Equivalently, we can take the infimum over representatives
\begin{equation}
D(g, h) = \mathrm{inf}_{\hat{g}, \hat{h}} ||\hat{g} - \hat{h}||_{op} \mathrm{.}  
\end{equation}

We introduce $B(x,r)$ as the closed ball in $G_d$ with radius $r$ centered at $x$ with respect to $D$ and $B(r)$ is such ball centered at $\mathbb{I}$.

By $\mathrm{Vol}(\Omega)$ we mean the Haar volume of a subset $\Omega \subset G_d$,
\begin{equation}
\mathrm{Vol}(\Omega) = \int_{G_d} \mathbbm{1}_{\Omega}(g) d \mu(g) \mathrm{,} 
\end{equation}
where $\mathbbm{1}_{\Omega}$ denotes the indicator function of $\Omega$.

We start with the following simple observation. Let $f \in L^2_t(G_d)$, $\int_{G_d} f(g) d\mu(g)=1$, and pick some region $\Omega \subseteq G_d$. Since
\begin{equation}
\mathrm{Vol}({\Omega})=\int_{\Omega} 1 d \mu(g)
\end{equation} we have that
\begin{equation}
\int_{\Omega} 1 d \mu(g)-\int_{\Omega} (T_{\nu_{\mathcal{S}},t}^{\ell} f)(g) d \mu(g)=\langle 1-T_{\nu_{\mathcal{S}},t}^{\ell}f | \mathbbm{1}_{\Omega} \rangle \leq ||1-T_{\nu_{\mathcal{S}},t}^{\ell}f||_{2} \cdot \sqrt{\mathrm{Vol}({\Omega})}    
\end{equation}
and
\begin{equation}
||1-T_{\nu_{\mathcal{S}},t}^{\ell}f||_{2}=||(T_{\nu_{\mathcal{S}},t}^{\ell}-T_{\mu})f||_{2} \leq ||(T_{\nu_{\mathcal{S}},t}^{\ell}-T_{\mu})||_{op} \cdot||f||_{2}  \leq e^{-\ell \, \mathrm{gap}_t({\mathcal{S}})} \cdot||f||_2 \mathrm{,}   
\end{equation}
where $T_{\mu}$ is a projector onto $L^2_0(G_d)$ on $L^2_t(G_d)$. Thus,
\begin{equation}
\label{obs:balls}
 \int_{\Omega} \left(T_{\nu_{\mathcal{S}},t}^{\ell} f \right)(g) \geq \mathrm{Vol}({\Omega})-e^{-\ell \, \mathrm{gap}_{t}(\mathcal{S})}||f||_2 \sqrt{\mathrm{Vol}({\Omega})}  \mathrm{.} 
\end{equation}
Clearly, analogous results are true for other averaging operators, in particular for $T_{\nu_{\mathrm{S}}}$.

\begin{theorem}
\label{th:ahr}
Assume $\mathcal{S}$ is such that $T_{\mathcal{S}}$ has a spectral gap. Then $S_{\ell}$ is an $\epsilon$-net for every $\ell$
$$\ell \geq \frac{\mathrm{dim}\,G_d}{\mathrm{gap}(\mathcal{S})} \mathrm{log}\left(\frac{1}{\epsilon}\right)+B \mathrm{.}$$
\end{theorem}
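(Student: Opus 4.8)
The plan is to run the classical ``bump-function in $L^2$'' argument via the contrapositive, feeding the norm-convergence estimate (\ref{obs:balls}) for the \emph{global} operator $T_{\nu_{\mathcal{S}}}$ into a lower bound for the Haar volume of small metric balls.

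First I would suppose $S_\ell$ is \emph{not} an $\epsilon$-net, so that there is $h\in G_d$ with $D(h,w)>\epsilon$ for every $w\in S_\ell$. Put $r\coloneqq\epsilon/2$, let $\Omega\coloneqq B(h,r)$, and take the test function $f\coloneqq \mathbbm{1}_{B(r)}/\mathrm{Vol}(B(r))$, the normalised indicator of the ball of radius $r$ about $\mathbb{I}$. Then $f\in L^2(G_d)$, $\int_{G_d} f\,d\mu=1$, and $\|f\|_2=\mathrm{Vol}(B(r))^{-1/2}$. Because $f$ is supported in $B(r)$ and $D$ is $G_d$-invariant, $(T_{\nu_{\mathcal{S}}}^\ell f)(g)=|\mathcal{S}|^{-\ell}\sum_{w\in S_\ell} f(w^{-1}g)$ can be nonzero only when $D(g,w)\le r$ for some $w\in S_\ell$; but for $g\in\Omega$ the triangle inequality gives $D(g,w)\ge D(h,w)-D(h,g)>\epsilon-r=r$, so $T_{\nu_{\mathcal{S}}}^\ell f\equiv 0$ on $\Omega$.

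Next I would plug $\Omega$ into the version of (\ref{obs:balls}) for $T_{\nu_{\mathcal{S}}}$ (the text already records that the estimate holds for the global operator, with $\mathrm{gap}(\mathcal{S})$ in place of $\mathrm{gap}_t(\mathcal{S})$). Since the left-hand integral vanishes, this reads $0\ge \mathrm{Vol}(\Omega)-e^{-\ell\,\mathrm{gap}(\mathcal{S})}\|f\|_2\sqrt{\mathrm{Vol}(\Omega)}$; dividing by $\sqrt{\mathrm{Vol}(\Omega)}>0$ and using $\mathrm{Vol}(\Omega)=\mathrm{Vol}(B(r))$ together with $\|f\|_2=\mathrm{Vol}(B(r))^{-1/2}$ collapses everything to $\mathrm{Vol}(B(\epsilon/2))\le e^{-\ell\,\mathrm{gap}(\mathcal{S})}$, i.e. $\ell\,\mathrm{gap}(\mathcal{S})\le \log\bigl(1/\mathrm{Vol}(B(\epsilon/2))\bigr)$. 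Contrapositively, whenever $\ell>\mathrm{gap}(\mathcal{S})^{-1}\log\bigl(1/\mathrm{Vol}(B(\epsilon/2))\bigr)$ the set $S_\ell$ is an $\epsilon$-net.

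Finally I would convert the volume term into the advertised form. For the compact group $G_d$ equipped with the invariant metric $D$ one has a lower bound $\mathrm{Vol}(B(r))\ge c\,r^{\dim G_d}$ with an explicit constant $c=c(d)$, valid for all $r$ not exceeding the diameter (one compares $D$ near $\mathbb{I}$ with the exponential chart on $\mathfrak{su}(d)$ and bounds the Jacobian from below). Then $\log\bigl(1/\mathrm{Vol}(B(\epsilon/2))\bigr)\le \dim G_d\,\log(1/\epsilon)+\dim G_d\log 2-\log c$, and the theorem follows with $B=\mathrm{gap}(\mathcal{S})^{-1}\bigl(\dim G_d\log 2-\log c\bigr)$; for $\epsilon$ exceeding the diameter the statement is vacuous, as then every nonempty $S_\ell$ is trivially an $\epsilon$-net. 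I expect the last step to be the only real obstacle: since $D$ on $\mathrm{PU}(d)$ is the operator-norm distance rather than a Riemannian one, some care is needed to make the ball-volume lower bound rigorous with a clean constant and to pin down the range of $r$ in which the $r^{\dim G_d}$ scaling is guaranteed.
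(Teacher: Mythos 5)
Your proposal is correct and follows essentially the same route as the paper: the same normalised indicator test function, the same vanishing of $\int_\Omega T_{\nu_{\mathcal{S}}}^{\ell}f\,d\mu$ on a ball of radius $\epsilon/2$ around the unapproximated point, the same application of (\ref{obs:balls}) to get $\mathrm{Vol}(B(\epsilon/2))\leq e^{-\ell\,\mathrm{gap}(\mathcal{S})}$, and the same conversion via a volume-of-balls estimate from \cite{szarek1998}. You correctly note that the final step requires a \emph{lower} bound $\mathrm{Vol}(B(r))\geq C_V r^{\dim G_d}$ (the paper's displayed inequality is written with the opposite sign, though its constant $B$ is computed as if the bound were a lower one), which is the only point where your write-up is actually more careful than the original.
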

\begin{proof}
Pick an element $U_0 \in G_d$ and a ball ${\Omega}=B(U_0,\epsilon/2)$ centered at it. Pick $\ell$ such that there is no $w_{\ell}\in \mathcal{S}_{\ell}$ which $\epsilon$-approximates $U_0$, i.e. such that $D(w_{\ell},U_0) \leq \epsilon$.
Let $f$ be a normalized indicator function of $B(\mathbb{I},\epsilon/2)$, i.e.
\begin{equation}
\label{eq:nif}
 f(x)=\frac{1}{\mathrm{Vol}(B(\mathbb{I},\epsilon/2))}\mathbbm{1}_{B(\mathbb{I},\epsilon/2)}(x) \mathrm{.}
\end{equation}
We have
\begin{equation}
\int_{\Omega} \left(T_{\nu_{\mathcal{S}}}^{\ell} f \right)(g) d\mu(g)=\frac{1}{|\mathcal{S}|} \sum_{w_{\ell} \in \mathcal{S}^{\ell}} \int_{G_d} f(w_{\ell}^{-1} g) d \mu(g)    
\end{equation}
but for each $g$ in ${\Omega}$
\begin{equation}
D(w_{\ell}^{-1}g ,\mathbb{I})=D(g, w_{\ell}) > \epsilon/2 \mathrm{,}    
\end{equation}
hence
\begin{equation}
\label{eq:intz}
\int_{\Omega} \left(T_{\nu_{\mathcal{S}}}^{\ell} f \right)(g) d \mu(g)=0    
\end{equation}
Using (\ref{obs:balls}) we get
\begin{equation}
e^{-\ell \, \mathrm{gap}(\mathcal{S})} \geq \mathrm{Vol}({\Omega}) \mathrm{,}    
\end{equation}
since $||f||_2=1/\sqrt{\mathrm{Vol}({\Omega})}$.
Hence, if
\begin{equation}
 e^{-\ell \, \mathrm{gap}(\mathcal{S})} < \mathrm{Vol}({\Omega}) \mathrm{,}   
\end{equation}
we get a contradiction, which means that
$\mathcal{S}_{\ell}$ is an $\epsilon$-net. On the other hand
\begin{equation}
\mathrm{Vol}({\Omega}) \leq C_V (\epsilon/2)^{\mathrm{dim}\, G_d} \mathrm{,}    
\end{equation}
where $C_V$ is some group constant. Thus,
\begin{equation}
\ell \geq \frac{\mathrm{dim}\, G_d}{\mathrm{gap}(\mathcal{S})} \mathrm{log}\left(\frac{1}{\epsilon}\right)+B \mathrm{,}    
\end{equation}
with
\begin{equation}
 B=-\frac{\mathrm{log}(C_V)-\mathrm{dim}\, G_d \cdot \mathrm{log}(2)}{\mathrm{gap}(\mathcal{S})} \mathrm{.}   
\end{equation}
\end{proof}
\noindent
We have $\mathrm{dim}\, G_d=d^2-1$ and in the case of $G_d$ can put $C_V=(9.5)^{d^2-1}$, so
\begin{equation}
B = - \frac{d^2-1}{\mathrm{gap}(\mathcal{S})}\mathrm{log}(4.75) \mathrm{.}   
\end{equation}
The values of constant $C_V$ bounding the volume of a ball in various groups can be obtained by techniques from \cite{szarek1998}.


Note that Theorem \ref{th:ahr} cannot be stated in analogous form for the $t$-averaging operators $T_{\nu_{\mathcal{S}}, t}$, since the normalized indicator function (\ref{eq:nif}) does not belong to $L^2_t(G_d)$ for any $t$ so we cannot write (\ref{eq:intz}) for $T_{\nu_{\mathcal{S}},t}$ instead of $T_{\nu_{\mathcal{S}}}$. However, by considering appropriate approximations of Dirac delta by polynomials from $L^2_t(G_d)$, we can show that
\begin{equation}
\int_{\Omega} \left(T_{\nu_{\mathcal{S}},t}^{\ell} f \right)(g) d \mu(g)   
\end{equation}
is sufficiently small and hence obtain analogous results. In particular, it is known that 
\begin{equation}
\ell \geq \frac{C \cdot \mathrm{log}\left(\frac{1}{\epsilon}\right)}{\mathrm{gap}_r (\mathcal{S})} \mathrm{,}
\end{equation}
where $r=D/ \epsilon^{2 (d^2-1)+2}$ and $C, D$ are some constants \cite{varju13}.
This result has been improved in case of $\mathrm{U}(d)$ in \cite{oszmaniec22}, where
\begin{equation}
\label{eq:bmo}
\ell \geq \frac{(d^2-1)(2 \mathrm{log}\left(\frac{1}{\epsilon}\right) + \mathrm{log}(4 C_b^{3/2}d)) + \mathrm{log}(32)}{\mathrm{gap}_t(\mathcal{S})} \mathrm{,}    
\end{equation}
for some absolute constant $C_b$ and $t \geq 5 d^{5/2} /\epsilon \cdot \tau(\epsilon,d)$, where  $\tau(\epsilon,d)$ is

\begin{equation}
\label{eq:tau}
 \tau(\epsilon, d) = \mathrm{log}^{\frac{1}{2}} (6 C_b / \epsilon) \cdot \sqrt{\frac{1}{32} \mathrm{log}^{\frac{1}{2}} (6 C_b / \epsilon) +  \mathrm{log} \left( \frac{d}{\epsilon} \cdot \mathrm{log}^{\frac{1}{2}} (6 C_b / \epsilon)  \right)}  \mathrm{.}
\end{equation}

\section{Calculable lower bound on spectral gap}
\label{sec:mth}
In this section, we derive lower bounds on the spectral gap at scale $t$ for $\mathcal{S} \subset G_d$, such that any two pairs in $\mathcal{S}$ (of gate with its inverse) form an universal set themselves. This condition can be verified numerically by known universality criteria, see e.g. \cite{ sawicki21}.

Our bound for any $t$ can be calculated from the knowledge of certain gaps up to some fixed $t_0=t_0(d)$ and is of the form

\begin{equation}
\mathrm{gap}_t(\mathcal{S}) \geq  \alpha \cdot  g_{t_0}(\mathcal{S}) \cdot   \mathrm{log}(\beta t)^{-2c} \mathrm{,}
\end{equation}
where $\alpha, \beta, c > 0$ are some specific calculable constants and $g_{t_0}(\mathcal{S})$ can be determined numerically by calculating gaps of certain sets derived from $\mathcal{S}$ up to some calculable scale $t_0$.

We study the action of the $t$-averaging operator wrt to $\mathcal{S}$, 
\begin{equation}
 T_{\nu_{\mathcal{S}},t} \coloneqq \bigoplus_{\lambda \in \Lambda_t}\pi_{\lambda}(\nu_S) \mathrm{,}
\end{equation}
acting on the Hilbert space \begin{equation}
\mathcal{H}_t=\bigoplus_{\lambda \in \Lambda_t} \mathcal{H}_{\lambda} \mathrm{.}    
\end{equation}

\noindent
By $S(\mathcal{H}_{\lambda})$ we denote the unit sphere in $\mathcal{H}_{\lambda}$,
\begin{equation}
S(\mathcal{H}_{\lambda})=\{w \in \mathcal{H}_{\lambda} |\, ||w||=1\} \mathrm{.}    
\end{equation}
We choose the orthonormal basis
\begin{equation}
\{w^{\lambda}_{ij} | \,1 \leq i, j \leq d_{\lambda}, \, \lambda \in \Lambda_t \}    
\end{equation} of $\mathcal{H}_t$, induced by the basis (\ref{eq:fbasis}).
Clearly, $||T_{\nu_{\mathcal{S},t}}||_{op} \leq 1$ and our goal is to improve this bound.
The irreps $\Lambda_t$ of $G_d$ can be divided into three disjoint sets, based on the type of the representation of $\mathrm{U}(d)$ they come from:
\begin{equation}
    \Lambda_t = \Lambda_{t, \mathbb{H}} \cap \Lambda_{t, \mathbb{R}} \cap \Lambda_{t, \mathbb{C}} \mathrm{,}
\end{equation}
where $\mathbb{H}$, $\mathbb{R}$ and $\mathbb{C}$ stands for quaternionic, real and complex representations. In fact, $\Lambda_{t, \mathbb{H}}=\emptyset$ since quaternionic representations of $\mathrm{U}(d)$ do not contribute to projective representations.

Since
\begin{equation}
|| \tilde{T}_{\nu_{\mathcal{S}},t}||_{op}=\mathrm{max}_{\lambda \in \tilde{\Lambda}_t} ||\pi_{\lambda}(\nu_{\mathcal{S}})||_{op} \mathrm{,}    
\end{equation}
we fix any $\lambda \in \tilde{\Lambda}_t$ and consider $||\pi_{\lambda}(\nu_{\mathcal{S}})||_{op}$.\\

Additionally we assume $\mathcal{S}=\{U_1, \ldots, U_k, U_1^{-1}, \ldots U_k^{-1}\}$ is generic so that for each $1 \leq i \neq j \leq k$, the set $\{U_i^2, U_j^2, U_i^{-2}, U_j^{-2}\}$ is a universal symmetric set.\\
\textbf{Strategy of the proof}\\
Our strategy is to show that for any $\lambda \in \tilde{\Lambda}_{t}$, any $w \in S(\mathcal{H}_{\lambda})$ and any generator $U_m \in \mathcal{S}$, except for at most one, say $U_k$,
\begin{equation}
||(\pi_{\lambda}(U_i)+\pi_{\lambda}(U_i^{-1}))w|| \leq 2-(b_i/2)^2    
\end{equation} for some coefficients $b_i^2=b_i^2(\lambda)>0$ which can be bounded by gaps of certain subsets of the set $\mathcal{S}^2 = \{U_1^2, \ldots, U_k^2, U_1^{-2}, \ldots U_k^{-2}\}$ at some known scale $t_0$. Hence,
\begin{equation}
||\pi_{\lambda}(\nu_{\mathcal{S}}) w|| \leq \frac{1}{|\mathcal{S}|} \left[\sum_{1 \leq i < k} ||(\pi_{\lambda} (U_i)  +\pi_{\lambda} (U_i^{-1}))w ||+2\right] \leq 1 -  \frac{1}{4|\mathcal{S}|} \sum_{i=1}^{|\mathcal{S}|/2-1} b_i^2(\lambda) \mathrm{,}    
\end{equation}
which implies 
\begin{equation}
\mathrm{gap}_t(\mathcal{S}) \geq \mathrm{min}_{\lambda \in \tilde{\Lambda}_{t}} \frac{1}{4 |\mathcal{S}|} \sum_{i=1}^{|\mathcal{S}|/2-1} b_i^2(\lambda) > 0  \mathrm{.}    
\end{equation}

This means that we can obtain a non-trivial lower bound on $\mathrm{gap}_t(S)$ for any $t \geq t_0$. Crucially, the value of $t_0$ can be easily determined and is not large, so the numerical calculations of the bound are feasible.\\
\noindent
\textbf{The main reasoning}\\
Since $\pi_{\lambda}(g_i)$ is unitary we have
\begin{equation}
\label{eq:pm}
||(\pi_{\lambda}(g_i)+\pi_{\lambda}(g_i^{-1}))w||^2=4-||(\pi_{\lambda}(g_i)-\pi_{\lambda}(g_i^{-1}))w||^2   
\end{equation}
for any $w \in S(\mathcal{H}_{\lambda})$.
Let $\imath_{\lambda}$ denote the Frobenius-Schur indicator of $\pi_\lambda$,
\begin{equation}
\imath_{\lambda}=\int_{G_d} \chi_{\lambda}(g^2) d \mu(g) = \begin{cases}
    -1,  \quad \mathrm{if}\, \lambda \in \Lambda_{t, \mathbb{H}},\\
    0, \quad \mathrm{if}\, \lambda \in \Lambda_{t, \mathbb{C}} \mathrm{,}\\
    1, \quad \mathrm{if}\, \lambda \in \Lambda_{t, \mathbb{R}} \mathrm{.}
    \end{cases} \mathrm{.}    
\end{equation}
Note that
\begin{equation}
\int_{G_d} \pi_{\lambda}(g^2) d \mu(g) = \frac{\imath_{\lambda}}{d_{\lambda}} \mathbb{I_{\lambda}}     
\end{equation} since the LHS is a self-intertwiner.


\noindent
Observe that since $||\lambda||_1 > 0$, for any $i,j$ and $p,q$ we have
\begin{equation}
\int_{G} \langle \pi_{\lambda}(g^2) w_{ij}^{\lambda}, w_{pq}^{\lambda} \rangle d\mu(g) =  \frac{\imath_{\lambda}}{d_{\lambda}} ||w_{ij}^{\lambda}||^2 \delta_{ip} \delta_{jq} \mathcal{.}   
\end{equation}
\noindent
Hence, for any $w\in S(\mathcal{H}_{\lambda})$
\begin{equation}
\int_{G} \langle \pi_{\lambda}(g^2) w, w \rangle d\mu(g) =  \frac{\mathrm{\imath_{\lambda}}}{d_{\lambda}} \mathrm{,}    
\end{equation}
so for any $\lambda \in \Lambda_t$, there exists $h=h(w) \in G_d$ such that
$ \mathfrak{Re}\langle \pi_{\lambda}(h^2) w, w \rangle < \frac{\imath_{\lambda}}{d_{\lambda}} $ and
\begin{equation}
\label{eq:lnorm}
||(\pi_{\lambda}(h)-\pi_{\lambda}(h^{-1}))w|| > \sqrt{2\left( 1 - \frac{\imath_{\lambda}}{d_{\lambda}}\right)} \mathrm{.}    
\end{equation}
Note that if $\lambda$ is quaternionic the bound is even better.

We want to connect $h^2$ with a square of some generator $g_i^2$, so that the large value of the norm (\ref{eq:lnorm}) will propagate to the large value of $||(\pi_{\lambda}(g_i^2)-\pi_{\lambda}(e))w||$.
Let
\begin{equation}
\mathcal{S}^2 \coloneqq \{U_1^2, \ldots, U_k^2, U_1^{-2}, \ldots U_k^{-2}\}   
\end{equation}
and by $\mathcal{S}^2_{j_1 \ldots j_m}$ we denote the set $\mathcal{S}^2$ without elements $U^2_{j_1}, \ldots, U^2_{j_m}$ (and their inverses),
\begin{equation}
\label{eq:sind}
\mathcal{S}^2_{j_1 \ldots j_m} \coloneqq \mathcal{S}^2 \setminus   \{U^2_{j_1},U^{-2}_{j_1} \ldots, U^2_{j_m}, U^{-2}_{j_m}\} \mathrm{.}
\end{equation}
By the assumption, each set $\mathcal{S}^2_{j_1 \ldots j_m}$ is universal for $1 \leq m \leq k-2$.
Consider any $\mathcal{S}^2_{j_1 \ldots j_m}$ (we allow $\mathcal{S}^2$ as the special case with $m=0$).
We find an $\epsilon_{j_1 \ldots j_m}$-approximation of $h^2$ in terms of squares generators, namely we write $\tilde{h}=g_1^2 g_2^2 \ldots g_{\ell_{j_1 \ldots j_m}}^2$ , where each $g_i^2 \in \mathcal{S}^2_{j_1 \ldots j_m}$, so that $D(h^2,\tilde{h}) < \epsilon$ and we specify $1 > \epsilon_{j_1 \ldots j_m} >0$ later. We have 
\begin{equation}
\sqrt{2\left( 1 - \frac{\imath_{\lambda}}{d_{\lambda}}\right)} < ||(\pi_{\lambda}(h^2)-\pi_{\lambda}(e))w|| \leq ||(\pi_{\lambda}(e)-\pi_{\lambda}(\tilde{h}))w||+||(\pi_{\lambda}(\tilde{h})-\pi_{\lambda}(h^2))w||   
\end{equation}

so
\begin{align}
||(\pi_{\lambda}(e)-\pi_{\lambda}(\tilde{h}))w|| &\geq \sqrt{2\left( 1 - \frac{\imath_{\lambda}}{d_{\lambda}}\right)} - ||(\pi_{\lambda}(\tilde{h})-\pi_{\lambda}(h^2))w||\\ &\geq \sqrt{2\left( 1 - \frac{\imath_{\lambda}}{d_{\lambda}}\right)} - ||\pi_{\lambda}(\tilde{h})-\pi_{\lambda}(h^2)||_{op} \nonumber \mathrm{.}    
\end{align}
From the unitary invariance of operator norm
\begin{equation}
||\pi_{\lambda}(\tilde{h})-\pi_{\lambda}(h^2)||_{op}=||\pi_{\lambda}(e)-\pi_{\lambda}(\bar{h})||_{op}\mathrm{,}   
\end{equation}
where $\bar{h}=\tilde{h}^{-1}h^2$
and $D(e,\bar{h}) < \epsilon_{j_1 \ldots j_m}$.
Let us fix a maximal torus $T \subset G_d$ with Lie algebra $\mathfrak{t} \subset \mathfrak{su}(d)$. We can write $\bar{h}=gtg^{-1}$, where $t \in T$ and $g \in G$. Clearly,
\begin{equation}
\mathrm{Spec}(\pi_{\lambda}(\bar{h}))=\mathrm{Spec}(\pi_{\lambda}(t)) \mathrm{.}    
\end{equation}
Let $\{e^{ i  \gamma_1}, \ldots, e^{  i
\gamma_{d_{\lambda}}}\}$ be the spectrum of $\pi_{\lambda}(t)$ and $\{w_1, \ldots, w_{d_{\lambda}}\}$ be an orthonormal basis of $\mathcal{H}_{\lambda}$ in which
\begin{equation}
\pi_{\lambda}(t) w_j = e^{ i \gamma_j} w_j   
\end{equation}
for $1 \leq j \leq d_{\lambda}$.
By the definition of a real weight we have
\begin{equation}
|\gamma_j|=|\langle\mu_j, H \rangle|=|(H_{\mu_j}, H)| \leq ||\lambda||_1 \cdot \mathrm{max}_i |\theta_i|   
\end{equation}
for some weight $\mu_j$ of irrep $\pi_{\lambda}$ and $H=\mathrm{log}(t)=\mathrm{diag}(i \theta_1, \ldots,  i\theta_{d}) \in \mathfrak{t}$. \footnote{ Note that $D(e,\bar{h})=D(e,t) < \epsilon_{j_1 \ldots j_m} < 1$ so $\mathrm{log}(t)$ exists.} We assume $\theta_i \in (-\pi, \pi]$ for each $i$.
Since $D(e,t) < \epsilon_{j_1 \ldots j_m}$ we have
\begin{equation}
|\theta_i|/\pi \leq  \;  |\mathrm{sin}(\theta_i/2)| < \epsilon_{j_1 \ldots j_m}/2    
\end{equation}
for each $i$, so
\begin{equation}
|\gamma_j| \leq  \pi ||\lambda||_1 \epsilon_{j_1 \ldots j_m}/2   \mathrm{.}
\end{equation}
Finally,
\begin{equation}
||\pi_{\lambda}(e)-\pi_{\lambda}(\bar{h})||_{op}=||\pi_{\lambda}(e)-\pi_{\lambda}(t)||_{op} \leq 2 \,\mathrm{max}_i \, |\mathrm{sin(\gamma_i/2)}| \leq \mathrm{max}_i \, |\gamma_i|
\end{equation}
hence 



\begin{equation}
||\pi_{\lambda}(e)-\pi_{\lambda}(\bar{h})||_{op} \leq C ||\lambda||_1 \cdot \epsilon_{j_1 \ldots j_m} \mathrm{,}   
\end{equation}
where $C=\pi/2$. Thus,
\begin{equation}
||(\pi_{\lambda}(e)-\pi_{\lambda}(\tilde{h}))w|| \geq \sqrt{2\left( 1 - \frac{\imath_{\lambda}}{d_{\lambda}}\right)}-C||\lambda||_1 \cdot \epsilon_{j_1 \ldots j_m} \mathrm{.}    
\end{equation}

We use triangle inequality to propagate the result into some generator.
\begin{align}
&\sqrt{2\left( 1 - \frac{\imath_{\lambda}}{d_{\lambda}}\right)}-C||\lambda||_1 \epsilon_{j_1 \ldots j_m} \leq ||(\pi_{\lambda}(e)-\pi_{\lambda}(\tilde{h}))w||\\ &=||(\pi_{\lambda}(e)-\pi_{\lambda}(g_1^2 g_2^2 \ldots g_{\ell}^2))w|| \leq ||(\pi_{\lambda}(e)-\pi_{\lambda}(g_1^2)))w||+||(\pi_{\lambda}(g_1^2)-\pi_{\lambda}(g_1^2 g_2^2)))w||+\ldots \nonumber\\ &+ ||(\pi_{\lambda}(g_1^2 g_2^2 \ldots g_{i-1}^2)-\pi_{\lambda}(g_1^2 g_2^2 \ldots g_i^2)))w||+\ldots+||(\pi_{\lambda}(g_1^2 g_2^2 \ldots g_{\ell-1}^2)-\pi_{\lambda}(g_1^2 g_2^2 \ldots g_{\ell}^2)))w|| \nonumber
\end{align}
so there exists $i$ such that
\begin{equation}
||(\pi_{\lambda}(g_1^2 g_2^2 \ldots g_{i-1}^2)-\pi_{\lambda}(g_1^2 g_2^2 \ldots g_i^2)))w|| \geq b_{j_1 \ldots j_m}(\lambda) \mathrm{,}    
\end{equation}
where
\begin{equation}
\label{eq:b}
b_{j_1 \ldots j_m}(\lambda)=\frac{\sqrt{2\left( 1 - \frac{\imath_{\lambda}}{d_{\lambda}}\right)}-C||\lambda||_1 \epsilon_{j_1 \ldots j_m}}{\ell_{j_1 \ldots j_m}}    
\end{equation}
and from the unitary invariance of operator norm
\begin{equation}
||(\pi_{\lambda}(g_1^2 g_2^2 \ldots g_{i-1}^2)-\pi_{\lambda}(g_1^2 g_2^2 \ldots g_i^2)))w||=||(\pi_{\lambda}(e)-\pi_{\lambda}( g_i^2)))w||    
\end{equation}
Since $g_i^2$ is an element from $\mathcal{S}^2_{j_1 \ldots j_m}$ and
\begin{equation}
||(\pi_{\lambda}(e)-\pi_{\lambda}(g_i^2))w||=||(\pi_{\lambda}(e)-\pi_{\lambda}(g_i^{-2}))w|| \mathrm{,}
\end{equation}
there exists $i_q$, where $1 \leq q \leq m$, such that
\begin{equation}
||(\pi_{\lambda}(e)-\pi_{\lambda}(U_{i_q}^2))w||\geq b_{j_1 \ldots j_m}(\lambda) \geq b_m(\lambda) \mathrm{,}    
\end{equation}
where
\begin{equation}
b_m(\lambda) \coloneqq \mathrm{min}_{i_1, \ldots, i_m} b_{i_1 \ldots i_m}(\lambda)    
\end{equation} is the bound for the worst choice of $i_1, \ldots, i_m$, which we denote $\mathcal{S}^2_m$ \footnote{Note that $b_m(\lambda) \geq b_{n}(\lambda)$ for $m < n$.}. The set $\mathcal{S}^2_m$ has the corresponding $\epsilon_m$ and $\ell_m$ via (\ref{eq:b}).

We proceed as follows. First, we consider above procedure for $\mathcal{S}^2$ and obtain
\begin{equation}
||(\pi_{\lambda}(e)-\pi_{\lambda}(U_{i_1}^2))w||\geq  b_0(\lambda) \mathrm{,}    
\end{equation}
for some $U_{i_1}^2 \in \mathcal{S}^2$. Next, we repeat the argument for $\mathcal{S}^2_{i_1}$ and get 
\begin{equation}
||(\pi_{\lambda}(e)-\pi_{\lambda}(U_{i_2}^2))w||\geq b_{i_1}(\lambda) \geq b_1(\lambda) \mathrm{,}   
\end{equation}
for some $U_{i_2}^2 \in \mathcal{S}^2_{i_1}$. We proceed in this manner until $m=k-2$, which gives
\begin{equation}
||(\pi_{\lambda}(e)-\pi_{\lambda}(U_{i_{k-1}}^2))w||\geq b_{i_1 i_2 \ldots i_{k-2}}(\lambda) \geq b_{k-2}(\lambda) \mathrm{,}    
\end{equation}
for some $U_{i_{k-1}}^2 \in \mathcal{S}^2_{i_1 \ldots i_{k-2}}$.

This way we obtain bounds for each pair generators except for one pair $\{U^2_{i_k}, U^{-2}_{i_k}\}$, where $1 \leq i_k\leq k$ is the remaining index.
Thus, using (\ref{eq:pm}), for all $i_m$ with $m \in \{1, \ldots, k-1\}$ we have
\begin{equation}
||(\pi_{\lambda}(U_{i_m})+\pi_{\lambda}( U_{i_m}^{-1})))w|| \leq \sqrt{4-b_{m-1}^2(\lambda)}    
\end{equation}
provided that
\begin{equation}
\epsilon_{m} \leq \frac{\sqrt{2 \left(1- \frac{\imath_{\lambda}}{d_{\lambda}}\right)}}{C ||\lambda||_1}, \epsilon_{m} < 1 \mathrm{.}    
\end{equation}
For $\ell_{m-1} \gg 1$, the good approximation is
\begin{equation}
\sqrt{4-b^2_{m-1}(\lambda)} \leq 2 - \left( \frac{b_{m-1}(\lambda)}{2}\right)^2 \mathrm{.}    
\end{equation}
Hence,
\begin{equation}
\label{eq:b1}
\mathrm{gap}_t(\mathcal{S}) \geq \mathrm{min}_{\lambda \in \tilde{\Lambda}_{t}} \frac{1}{8 k} \sum_{m=0}^{k-2} b_m^2(\lambda) > 0 \mathcal{.}     
\end{equation}

Using similar argument by considering only $\mathcal{S}^2_{i_1 \ldots i_{k-2}}$ we have the following, weaker bound
\begin{equation}
\label{eq:b2}
\mathrm{gap}_t(\mathcal{S}) \geq \mathrm{min}_{\lambda \in \tilde{\Lambda}_{t}} \frac{k-1}{8k}b_{k-2}^2(\lambda) > 0 \mathcal{.}
\end{equation}
Indeed, comparing (\ref{eq:b1}) and (\ref{eq:b2}) we have the inequality
\begin{equation}
\label{eq:comp2}
 \frac{1}{8k} \sum_{m=0}^{k-2} b_m^2(\lambda) \geq  \frac{k-1}{8k} b_{k-2}^2(\lambda) \mathrm{.}  \end{equation}
Moreover, the ratio between LHS and RHS of (\ref{eq:comp2}) is
\begin{equation}
 \frac{\frac{1}{k-1}  \sum_{m=0}^{k-2} b_m^2(\lambda) }{b_{k-2}^2(\lambda) } \mathrm{,}
\end{equation}
i.e. it is ratio between the average of a nonincreasing sequence $b_0^2, b_1^2, \ldots, b^2_{k-2}$ and its smallest element $ b^2_{k-2}$. Since we expect this sequence to (generically) quickly decrease, we suppose that the bound (\ref{eq:b1}) is (relatively) much better than (\ref{eq:b2}), at least generically.



It remains to somehow simultaneously bound the coefficients $b_m(\lambda)$ for all $\lambda \in \tilde{\Lambda}_t$. Since $\ell_{m} \leq \mathrm{diam}_{\epsilon}(G,\mathcal{S}^2_{m})$, from (\ref{eq:b1}), (\ref{eq:b}) we obtain the bound for the gap from the diameter

\begin{equation}
\mathrm{gap}_t(\mathcal{S}) \geq \mathrm{min}_{\lambda \in \tilde{\Lambda}_t} \frac{1}{8k} \sum_{m=0}^{k-2}  \left(\sqrt{2 \left(1- \frac{\imath_{\lambda}}{d_{\lambda}}\right)}-C ||\lambda||_1 \epsilon_{m}(\lambda)\right)^2 \frac{1}{ \mathrm{diam}_{\epsilon_{m}(\lambda)}(G,\mathcal{S}^2_{m})^2}    
\end{equation}
valid for
\begin{equation}
0< \epsilon_{m}(\lambda) \leq \frac{\sqrt{2 \left(1- \frac{\imath_{\lambda}}{d_{\lambda}}\right)}}{C ||\lambda||_1}, \epsilon_{m}(\lambda) < 1 \mathrm{,}    
\end{equation}
which can be weakened to the following simplified bound

\begin{equation}
\mathrm{gap}_t(\mathcal{S}) \geq \frac{1}{8k} \sum_{m=0}^{k-2}  \left(1-2C t \epsilon_{m}\right)^2 \frac{1}{ \mathrm{diam}_{\epsilon_{m}}(G,\mathcal{S}^2_{m})^2}    
\end{equation}
valid for
\begin{equation}
0< \epsilon_{m} \leq \frac{1}{2Ct}, \epsilon_{m} < 1 \mathrm{.}    
\end{equation}

We have a trade-off between the contribution of $\epsilon_{m}$ to the numerator of multiplicative term (the smaller the $\epsilon_{m}$ the better) and to the diameter (the larger the $\epsilon_{m}$ the better). 

Because we do not know how $\mathrm{diam}_{\epsilon_{m}}(G,\mathcal{S}_{m}^2)$ depends on $\epsilon_{m}$, in order to proceed we can use
Solovay-Kitaev theorem for $\mathcal{S}^2_m$ to bound
\begin{equation}
\mathrm{diam}_{\epsilon_{m}}(G,\mathcal{S}_{m}^2) \leq A_{m} \cdot \mathrm{log}^{c}\left(\frac{1}{c_s^2 \epsilon_{m}}\right), \quad A_{m}=\frac{1}{\left[ 2 \mathrm{log} \left(\frac{1}{c_s \epsilon_{0,m}}\right)\right]^c} \ell_{0, m}    
\end{equation}
where $c=\mathrm{log}(5)/\mathrm{log}(3/2) \approx 4$, $c_s$ is some constant ($c_s = d+2 + \mathcal{O}(\epsilon) $), $\epsilon_{0,m}$ is the $\epsilon$ of initial approximation in Solovay-Kitaev algorithm and $\ell_{0, m}$ is the word length of this approximation. Thus,

\begin{equation}
\label{eq:pre1}
\mathrm{gap}_t(\mathcal{S}) \geq \frac{1}{8k} \sum_{m=0}^{k-2}  \left(\frac{1-2C t \epsilon_{m}}{A_{m} }\right)^2 \mathrm{log}^{-2c}(c_s^{-2}\epsilon_{m}^{-1})
\end{equation}
for any
\begin{equation}
0< \epsilon_{m} \leq \frac{1}{2Ct}, \epsilon_{m} < 1 \mathrm{.}   
\end{equation}
We can bound $\ell_{0,m}$ by (\ref{eq:bmo}),

\begin{equation}
\label{eq:bmo2}
\ell_{0,m} \geq \frac{(d^2-1)(2 \mathrm{log}\left(\frac{1}{\epsilon_{0,m}}\right) + \mathrm{log}(4 C_b^{3/2}d)) + \mathrm{log}(32)}{\mathrm{gap}_{t_{0,m}}(\mathcal{S}^2_m)} \mathrm{,}    
\end{equation}
for $C_b=9 \pi$ and $t_{0,m} \geq 5 d^{5/2} /\epsilon_{0,m} \cdot \tau(\epsilon_{0,m},d)$.

For simplicity, we set the common $\epsilon_{0,m}=\epsilon_0$ and put $\epsilon_m=1/(4Ct)$, which yields

\begin{equation}
\mathrm{gap}_t(\mathcal{S}) \geq \frac{1}{32k} \sum_{m=0}^{k-2}  \frac{1}{ A_{m}^2}  \mathrm{log}^{-2c}(4 c_s^{-2} C t)=\frac{1}{32k} \sum_{m=0}^{k-2} \frac{1}{\ell_{0,m}^2} \left[2 \mathrm{log}(c_s^{-1} \epsilon_0^{-1})\right]^{2c}   \mathrm{log}^{-2c}(4 c_s^{-2} C t) \mathrm{,}
\end{equation}
and by setting the common scale $t_{0,m}=t_0\coloneqq 5d^{5/2}/{\epsilon_0} \cdot \tau(\epsilon_0, d)$ and using (\ref{eq:bmo2}) to set the value of $\ell_{0,m}$ we obtain
\begin{equation}
\mathrm{gap}_t(\mathcal{S}) \geq \frac{1}{32k} \sum_{m=0}^{k-2} \frac{\mathrm{gap}^2_{t_{0}}(\mathcal{S}^2_m)}{\left((d^2-1)(2 \mathrm{log}(\epsilon_0^{-1}) + \mathrm{log}(4 C_b^{3/2}d)) + \mathrm{log}(32)\right)^2} \left[\frac{2 \mathrm{log}(c_s^{-1} \epsilon_0^{-1})}{ \mathrm{log}(4 c_s^{-2} C t)}\right]^{2c}   \mathrm{,}
\end{equation}
which can be rewritten as

\begin{equation}
\label{eq:bf}
\mathrm{gap}_t(\mathcal{S}) \geq  \alpha \cdot g_{t_0}(\mathcal{S}) \cdot \mathrm{log}^{-2c}(\beta t)  \mathrm{,}
\end{equation}
where $\alpha$ and $\beta$ are
\begin{equation}
\label{eq:ab}
\alpha \coloneqq \frac{\left[2 \mathrm{log}(c_s^{-1} \epsilon_0^{-1})\right]^{2c}}{16 \cdot \left((d^2-1)(2 \mathrm{log}(\epsilon_0^{-1}) + \mathrm{log}(4 C_b^{3/2}d)) + \mathrm{log}(32)\right)^2} \mathrm{,}  \quad \beta \coloneqq \frac{4C}{c_s^2}   \mathrm{,}
\end{equation}
and
\begin{equation}
\label{eq:gt0}
g_{t_0}(\mathcal{S}) \coloneqq  \frac{1}{|\mathcal{S}|}\sum_{m=0}^{|\mathcal{S}|/2-2} \mathrm{gap}^2_{t_{0}}(\mathcal{S}^2_m) \mathcal{.}
\end{equation}

Finally, we can redefine $\mathrm{gap}_{t_{0}}(\mathcal{S}^2_m)$ to be the smallest value of a gap at scale $t_0$ over all sets $\mathcal{S}^2_{i_1 \ldots i_{m}}$,

\begin{equation}
\label{eq:gapt0r}
\mathrm{gap}_{t_{0}}(\mathcal{S}^2_m) \coloneqq \mathrm{argmin}_{i_1, \ldots, i_m} \mathrm{gap}_{t_0}(\mathcal{S}^2_{i_1 \ldots i_{m}})   
\end{equation}
and this way $g_{t_0}(\mathcal{S})$ can be determined numerically by the calculations at scale
\begin{equation}
\label{eq:t0}
t_0\coloneqq 5d^{5/2}/{\epsilon_0} \cdot \tau(\epsilon_0, d) \mathrm{.}
\end{equation}









\section*{Acknowledgments}
This research was funded by the National Science Centre, Poland under the grant OPUS: UMO-2020/37/B/ST2/02478. 

\printbibliography

\end{document}